\newtheorem{prop}{Proposition}
\newcommand{\vecY}{\mathbf{X}}
\newcommand{\vecX}{\mathbf{Y}}
\newcommand{\vecZ}{\mathbf{Z}}
\newcommand{\ident}{\mathbf{I}}
\newcommand{\vecU}{\mathbf{U}}
\newcommand{\vecB}{\mathbf{B}}
\newcommand{\vecz}{\mathbf{z}}
\newcommand{\vecy}{\mathbf{x}}
\newcommand{\vecb}{\mathbf{b}}
\newcommand{\vecmu}{\mbox{\boldmath$\mu$}}
\newcommand{\vecalpha}{\mbox{\boldmath$\alpha$}}
\newcommand{\vecbeta}{\mbox{\boldmath$\beta$}}
\newcommand{\matsig}{\mbox{\boldmath$\Sigma$}}
\newcommand{\gam}{\mbox{\boldmath$\Gamma$}}
\newcommand{\del}{\mbox{\boldmath$\Delta$}}
\newcommand{\matTheta}{\mbox{\boldmath$\Theta$}}
\newcommand{\varthet}{\mbox{\boldmath$\vartheta$}}
\newcommand{\isum}{\sum_{i=1}^n}
\newcommand{\gsum}{\sum_{g=1}^G}
\newcommand{\inv}{^{\raisebox{.2ex}{$\scriptscriptstyle-1$}}}
\DeclareMathOperator{\E}{\mathbb{E}}
\DeclareMathOperator{\R}{\mathbb{R}}
\begin{document}

\title{Mixtures of Generalized Hyperbolic Distributions and Mixtures of Skew-t Distributions for Model-Based Clustering with Incomplete Data}

\author{Yuhong Wei, Yang Tang and Paul D.\ McNicholas }

\date{\small Dept.\ of Mathematics \& Statistics, McMaster University, Hamilton, Ontario, Canada.}

\maketitle

\begin{abstract}
Robust clustering from incomplete data is an important topic because, in many practical situations, real data sets are heavy-tailed, asymmetric, and/or have arbitrary patterns of missing observations. Flexible methods and algorithms for model-based clustering are presented via mixture of the generalized hyperbolic distributions and its limiting case, the mixture of  multivariate skew-t distributions. An analytically feasible EM algorithm is formulated for parameter estimation and imputation of missing values for mixture models employing missing at random mechanisms. The proposed methodologies are investigated through a simulation study with varying proportions of synthetic missing values and illustrated using a real dataset. Comparisons are made with those obtained from the traditional mixture of generalized hyperbolic distribution counterparts by filling in the missing data using the mean imputation method.\\[-10pt]

\noindent\textbf{Keywords}: Clustering; generalized hyperbolic; missing data; mixture models; skew-t.
\end{abstract}

\section{Introduction}\label{sec:intro}

Finite mixture models are powerful and flexible tools for discovering unobserved heterogeneity in multivariate datasets. Assuming no prior knowledge of class labels, the application of finite mixture models in this way is known as model-based clustering. As \cite{mcnicholas16} points out, the association between mixture models and clustering goes back at least as far as \cite{tiedeman55}, who uses the former as a means of defining the latter. Gaussian mixture models are historically the most popular tool for model-based clustering and dominated the literature for quite some time~\citep[e.g.,][]{celeux95,fraley98,mclachlan03,bouveyron07,mcnicholas08,mcnicholas10}. The multivariate $t$-distribution, being a heavy-tailed alternative to the multivariate Gaussian distribution, made (robust) mixture modelling based on mixtures of multivariate $t$-distributions the most natural extension \citep[e.g.,][]{peel00,andrews11,andrews12,steane12,lin14b}. In many practical situations, however, real world datasets exhibit clusters that are not just heavy tailed but also asymmetric; furthermore, clusters can also be asymmetric yet not heavy tailed. Over the few past years, much attention has been paid to non-Gaussian approaches to model-based clustering and classification, including work on multivariate skew-$t$ distributions \citep[e.g.,][]{lin10,vrbik12,lee14,murray14a,murray14b,murray17b}, shifted asymmetric Laplace distributions \citep{franczak14}, multivariate power exponential distributions \citep{dang15}, multivariate normal inverse Gaussian distributions \citep{karlis09,ohagan16}, generalized hyperbolic distributions \citep{browne15,morris16,tortora16}, and hidden truncation hyperbolic distributions \citep{murray17}. 
A comprehensive review of model-based clustering work, up to and including some recent work on non-Gaussian mixtures, is given by \cite{mcnicholas16b}.

Unobserved or missing observations are frequently a hindrance in multivariate datasets and so developing mixture models that can accommodate incomplete data is an important issue in model-based clustering. The maximum likelihood and Bayesian approaches are two common imputation paradigms for analyzing data with incomplete observations. Herein, the missing data mechanism is assumed to be missing at random (MAR), as per \citet{rubin76} and \citet{little87}, meaning that the probability that a variable is missing for a particular individual depends only on the observed data and not on the value of the missing variable. Note that missing completely at random (MCAR) is a special case of MAR. Under MAR, the missing data mechanisms are ignorable for methods using the maximum likelihood approach. 

The maximum likelihood approach to clustering incomplete data has been well studied and is often used, particularly for Gaussian mixture models \citep[e.g.,][]{ghahramani94,lin06,browne13}. \citet{wang04} present a framework maximum likelihood estimation using an expectation-maximization (EM) algorithm \citep{dempster77} to fit a mixture of multivariate $t$-distributions with arbitrary missing data patterns, which was generalized by \cite{lin09b} to efficient supervised learning via the parameter expanded (PX-EM) algorithm \citep{liu98} through two auxiliary indicator matrices. \citet{linlearning14} further develops a family of multivariate-$t$ mixture models with 14 eigen-decomposed scale matrices in the presence of missing data through a computationally flexible EM algorithm by incorporating two auxiliary indicator matrices. \cite{wang15} uses a formulation of the mixture of skew-t distributions for model-based clustering with missing data. 

We consider fitting mixtures of generalized hyperbolic distributions (MGHD) and mixtures of multivariate skew-t distributions (MST) with missing information. In each case, an EM algorithm is used for model selection. The chosen formulation of the (multivariate) generalized hyperbolic distribution (GHD) is that used by \cite{browne15} and has formulations of several well-known distributions as special cases such as the multivariate skew-$t$, normal inverse Gaussian, variance-gamma, Laplace, and Gaussian distributions \citep[cf.][]{mcneil05}. In addition to considering missing data, we develop families of MGHD and MST mixture models, each with 14 parsimonious eigen-decomposed scale matrices corresponding to the famous Gaussian parsimonious clustering models of \citep[GPCMs;][]{banfield93,celeux95}; see Table~\ref{tab:GPCM} (Appendix~\ref{sec:gpcm}).


\section{Background}
\label{sec:review}

\subsection{Generalized Inverse Gaussian Distribution}
A random variable $W \in \R^{+}$ is said to have a generalized inverse Gaussian (GIG) distribution, introduced by \citep{good53}, with parameters $\lambda$, $\chi$, and $\psi$ if its probability density function is given by
\begin{equation}
\label{eqn:gig}
f_{\text{GIG}}(w~|~\lambda,\chi,\psi) = \frac{(\psi/\chi)^{\lambda/2}w^{\lambda-1}}{2K_\lambda(\sqrt{\psi\chi})}\text{exp}\left\{-\frac{\psi w+\chi/w}{2}\right\},
\end{equation}
where $\psi, \chi \in \R^{+}, \lambda \in \R$, and $K_\lambda$ is the modified Bessel function of the third kind with index $\lambda$. Herein, we write $W \sim \text{GIG}(\lambda,\chi,\psi)$ to indicate that a random variable $W$ has the GIG density as parameterized in \eqref{eqn:gig}. The GIG distribution has some attractive properties \citep{barndorff77b,blaesild78,halgreen79,jorgensen82}, including the tractability of the expectations:
\begin{equation}
\label{eqn:expgig}
\E [W^\alpha] = \left(\frac{\chi}{\psi}\right)^{\alpha/2}\frac{K_{\lambda+\alpha}(\sqrt{\psi\chi})}{K_{\lambda}(\sqrt{\psi\chi})},
\end{equation}
for $\alpha \in \R$, and
\begin{equation}\label{eqn:expgiglog}
\E [\log W] = \log\left(\sqrt{\frac{\chi}{\psi}}\right)+\frac{\partial}{\partial\lambda}\log(K_\lambda(\sqrt{\psi\chi})).
\end{equation}
Specifically, for $\alpha=1$ and $\alpha=-1$, we have
\begin{align}
\nonumber
\E [W] &= \sqrt{\frac{\chi}{\psi}}\frac{K_{\lambda+1}(\sqrt{\psi\chi})}{K_{\lambda}(\sqrt{\psi\chi})},\\
\nonumber
\E [{1}/{W}] &= \sqrt{\frac{\psi}{\chi}}\frac{K_{\lambda-1}(\sqrt{\psi\chi})}{K_{\lambda}(\sqrt{\psi\chi})} = \sqrt{\frac{\psi}{\chi}}\frac{K_{\lambda+1}(\sqrt{\psi\chi})}{K_{\lambda}(\sqrt{\psi\chi})}-\frac{2\lambda}{\chi}.
\nonumber
\end{align}

\citet{browne15} introduce another parameterization of the GIG distribution by setting $\omega=\sqrt{\psi\chi}$ and $\eta=\sqrt{\chi/\psi}$. Write $W \sim \mathcal{I}(\lambda,\eta,\omega)$; its density is given by
\begin{equation}
f_{\mathcal{I}} (w \mid \lambda,\eta,\omega) = \frac{(w/\eta)^{\lambda-1}}{2\eta K_{\lambda}(\omega)} \text{exp} \left\{ -\frac{\omega}{2}\left(\frac{w}{\eta}+\frac{\eta}{w}\right) \right\}
\end{equation}
for $w>0$, where $\eta \in \R^{+}$ is a scale parameter and $\omega \in \R^{+}$ is a concentration parameter. These two parameterizations of the GIG distribution are important ingredients for building the generalized hyperbolic distribution presented later.

\subsection{Generalized Hyperbolic Distribution }
Several alternative parameterizations of the GHD have appeared in the literature, e.g., \cite{blaesild81}, \cite{mcneil05}, and \cite{browne15}. \citet{barndorff77}  introduces the generalized hyperbolic distribution (GHD) to model the distribution of the sand grain sizes and subsequent reports described its statistical properties~\citep[e.g.,][]{barndorff78, blaesild81}. However, under this standard parameterization, the parameters of the mixing distribution are not invariant by affine transformations.  An important innovation was made by~\citet{mcneil05}, who gave a new parameterization of the GHD. Under this new parameterization, the linear transformation of GHD remains in the same sub-family characterized by the parameters of the mixing distribution. However, there is an identifiability issue arising under this parameterization. To solve this problem,~\citet{browne15} give an alternative parameterization.

Following~\citet{mcneil05}, a $p\times 1$ random vector $\vecY$ is said to follow a generalized hyperbolic distribution with index parameter $\lambda$, concentration parameters $\chi$ and $\psi$, location vector $\vecmu$, dispersion matrix $\matsig$, and skewness vector $\vecalpha$, denoted by $\vecY \sim \text{GH}_p(\lambda,\chi,\psi,\vecmu,\matsig,\vecalpha)$, if it can be represented by
\begin{equation}
\vecY = \vecmu + W\vecalpha + \sqrt{W}\vecU,
\end{equation}
where $\vecU \bot W$, $W \sim \text{GIG}(\lambda,\chi,\psi)$, $\vecU \sim \mathcal{N}(\mathbf{0},\matsig)$, and the symbol $\bot$ indicates independence. It follows that $\vecY \mid w \sim \mathcal{N} (\vecmu + w\vecalpha, w\matsig)$. So, the density of the generalized hyperbolic random vector $\vecY$ is given by
\begin{equation}
	\label{eqn:ghd}
	f(\vecy \mid \varthet) = \left [\frac{\chi +\delta(\vecy,\vecmu \mid \matsig )}{\psi+\vecalpha^{\intercal}\matsig\inv\vecalpha}\right]^{\frac{\lambda-p/2}{2}}\frac{(\psi/\chi)^{\lambda/2}K_{\lambda-p/2}\left(\sqrt{(\chi +\delta(\vecy,\vecmu \mid \matsig))(\psi+\vecalpha^{\intercal}\matsig\inv\vecalpha)}\right)}{(2\pi)^{p/2} |\matsig|^{1/2}K_\lambda(\sqrt{\chi\psi})\exp\{-(\vecy-\vecmu)^{\intercal}\matsig\inv\vecalpha\}},
\end{equation}
where $\delta (\vecy, \vecmu \mid \matsig) = (\vecy - \vecmu)^{\intercal}\matsig\inv(\vecy - \vecmu)$ is the squared Mahalanobis distance between $\vecy$ and $\vecmu$, $K_\lambda$ is the modified Bessel function of the third kind with index $\lambda$, and $\varthet= (\lambda, \chi,\psi, \vecmu, \matsig,\vecalpha)$ denotes the model parameters. The mean and covariance matrix of $\vecY$ are
\begin{equation}\label{eqn:meanvary}
\E (\vecY) = \vecmu+\E (W)\vecalpha \quad \text{and} \quad \mathbb{V}\text{ar}(\vecY) = \E (W)\matsig + \mathbb{V}\text{ar}(W) \vecalpha \vecalpha^{\intercal},
\end{equation}
respectively, where $\E (W)$ and $\mathbb{V}\text{ar}(W)$ are the mean and variance of the random variable~$W$, respectively.

Note that, in this parameterization, we need to hold $|\matsig|=1$ to ensure identifiability. Using $|\matsig|=1$ solves the identifiability problem but would be prohibitively restrictive for model-based clustering and classification applications. Hence,~\citet{browne15} develop a new parameterization of the GHD with index parameter $\lambda$, concentration parameter $\omega$, location vector $\vecmu$, dispersion matrix $\matsig$, and skewness vector $\vecbeta=\eta\vecalpha$, denoted by $\vecY \sim \text{GHD}_p(\lambda,\omega,\vecmu,\matsig,\vecbeta)$. Note that $\eta=1$. This formulation is given by 
\begin{equation}
\label{eqn:ghd2}
\vecY = \vecmu + W\vecbeta + \sqrt{W}\vecU,
\end{equation}
where $\vecU \bot W$, $W \sim \text{GIG}(\omega/\eta,\omega\eta,\lambda)$, with $\eta=1$, and $\vecU \sim \mathcal{N}(\mathbf{0},\matsig)$. Under this parameterization, the density of the generalized hyperbolic random vector $\vecY$ is
\begin{equation}
\label{eqn:ghd3}
	f(\vecy \mid \varthet) = \left [\frac{\omega +\delta(\vecy,\vecmu \mid \matsig )}{\omega+\vecbeta^{\intercal}\matsig\inv\vecbeta}\right]^{\frac{\lambda-p/2}{2}}\frac{K_{\lambda-p/2}\left(\sqrt{(\omega +\delta(\vecy,\vecmu \mid \matsig))(\omega+\vecbeta^{\intercal}\matsig\inv\vecbeta)}\right)}{(2\pi)^{p/2} |\matsig|^{1/2}K_\lambda(\omega)\text{exp}\{-(\vecy-\vecmu)^{\intercal}\matsig\inv\vecbeta\}},
\end{equation}
where $\delta(\vecy,\vecmu \mid \matsig )$ and $K_{\lambda-p/2}$ are as described earlier. We use this parameterization when we describe parameter estimation (cf.\ Section~\ref{sec:mghd}).

The following result shows an appealing closure property of the generalized hyperbolic distribution under affine transformation and conditioning as well as the formation of marginal distributions, which is useful for developing new methods presented later. Suppose that $\vecY$ is a $p$-dimensional random vector having a generalized hyperbolic distribution as in \eqref{eqn:ghd3}, i.e., $\vecY \sim \text{GHD}_p(\lambda,\omega,\vecmu,\matsig,\vecbeta)$. Assume that $\vecY$ is partitioned as $\vecY = (\vecY_1^{\intercal}, \vecY_2^{\intercal})^{\intercal}$, where $\vecY_1$ takes values in $\R^{d_1}$ and $\vecY_2$ in $\R^{d_1}=\R^{p-d_1}$, with
\begin{align}
\nonumber
\vecmu = \begin{pmatrix} \vecmu_1 \\ \vecmu_2 \end{pmatrix},&&\vecbeta = \begin{pmatrix} \vecbeta_1 \\ \vecbeta_2 \end{pmatrix},&&\matsig = \begin{pmatrix} \matsig_{11} & \matsig_{12}\\ \matsig_{21} &\matsig_{22} \end{pmatrix},
\end{align}
where $\vecY$, $\vecmu$, and $\vecbeta$ have similar partitions. Furthermore, $\matsig_{11}$ is $d_1\times d_1$ and $\matsig_{22}$ is $d_2\times d_2$.

\begin{prop}
Affine transformation of the generalized hyperbolic distribution. If $\vecY \sim \text{GHD}_p(\lambda,\omega,\vecmu,\matsig,\vecbeta)$ and $\vecX = \vecB\vecY+\vecb$ where $\vecB \in \R^{k\times p}$ and $\vecb \in \R^{p}$, then
\begin{equation}
\vecX \sim \text{GHD}_k(\lambda,\omega,\vecB\vecmu+\vecb,\vecB\matsig\vecB^{\intercal},\vecB\vecbeta),
\end{equation}
\end{prop}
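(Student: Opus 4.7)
The plan is to avoid a Jacobian/density-based calculation and instead work with the stochastic mixture representation \eqref{eqn:ghd2}, since $\vecB$ is a general $k\times p$ matrix that need not be square or invertible. A density-based change of variables would force a pseudo-inverse or a separate case analysis for $k<p$, whereas the mixture representation handles the non-square case uniformly.

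First I write $\vecY = \vecmu + W\vecbeta + \sqrt{W}\,\vecU$ with $\vecU \sim \mathcal{N}(\mathbf{0},\matsig)$ and $W \sim \mathrm{GIG}(\omega,\omega,\lambda)$ (with $\eta=1$) independent of $\vecU$. Applying the affine map $\vecX = \vecB\vecY + \vecb$ and regrouping yields
\begin{equation*}
\vecX = (\vecB\vecmu+\vecb) + W(\vecB\vecbeta) + \sqrt{W}\,(\vecB\vecU).
\end{equation*}
I then identify this with the generalized hyperbolic mixture form in $\R^k$: the new location is $\vecB\vecmu+\vecb$, the new skewness is $\vecB\vecbeta$, and by linearity of the multivariate normal, $\vecB\vecU \sim \mathcal{N}(\mathbf{0},\vecB\matsig\vecB^{\intercal})$. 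Because $\vecB\vecU$ is a deterministic function of $\vecU$, the independence $\vecU \bot W$ carries over to $\vecB\vecU \bot W$. The mixing scalar $W$ itself is untouched, so its $\mathrm{GIG}(\omega,\omega,\lambda)$ law, and hence the index and concentration parameters $\lambda$ and $\omega$, are preserved. Matching \eqref{eqn:ghd2} term by term then delivers $\vecX \sim \mathrm{GHD}_k(\lambda,\omega,\vecB\vecmu+\vecb,\vecB\matsig\vecB^{\intercal},\vecB\vecbeta)$.

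The only real subtlety, and the one I expect to be the main obstacle to a fully rigorous statement, is non-degeneracy of $\vecB\matsig\vecB^{\intercal}$: for the density form \eqref{eqn:ghd3} of $\vecX$ to be well defined one needs $\vecB$ to have full row rank (in particular $k \le p$), which is implicit in the proposition. Without this condition the distributional identity above still holds, but the target law is supported on an affine subspace of $\R^k$ and admits no Lebesgue density. This is exactly why I prefer the mixture representation over a Jacobian argument: the equality in distribution falls out essentially for free, with no invertibility requirement on $\vecB$, and the two ``difficult'' ingredients of \eqref{eqn:ghd3} (the Bessel factor and the skewness-in-exponent term) need never be manipulated directly.
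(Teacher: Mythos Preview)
Your proposal is correct and follows essentially the same approach as the paper: the paper's proof is the one-line remark that the result follows by substituting the stochastic representation \eqref{eqn:ghd2} into $\vecX = \vecB\vecY + \vecb$. Your version simply spells out the details (linearity of the Gaussian, preservation of independence and of the GIG mixing law) and adds the sensible caveat about full row rank of $\vecB$ for the density form to exist.
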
\begin{proof}
The result follows by substituting  \eqref{eqn:ghd2} into $\vecX = \vecB\vecY+\vecb$.
\end{proof}

\begin{prop} The marginal distribution of $\vecY_1$ is a generalized hyperbolic distribution as in \eqref{eqn:ghd3} with index parameter $\lambda$, concentration parameter $\omega$, location vector $\vecmu_1$, dispersion matrix $\matsig_{11}$, and skewness vector $\vecbeta_1$, i.e., $\vecY_1 \sim \text{GHD}_{d_1}(\lambda,\omega,\vecmu_1,\matsig_{11},\vecbeta_1)$.
\end{prop}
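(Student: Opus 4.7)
The plan is to derive Proposition 2 as an immediate corollary of Proposition 1, by exhibiting $\vecY_1$ as a linear image of $\vecY$ under the projection onto the first $d_1$ coordinates.

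Concretely, I would set $\vecB = [\,\ident_{d_1}\ \mathbf{0}\,] \in \R^{d_1 \times p}$, where $\ident_{d_1}$ is the $d_1 \times d_1$ identity and $\mathbf{0}$ is the $d_1 \times (p - d_1)$ zero block, together with $\vecb = \mathbf{0}$. Using the stated partition of $\vecY$, one has $\vecB\vecY + \vecb = \vecY_1$. Proposition 1 then applies verbatim and yields
\[
\vecY_1 \sim \text{GHD}_{d_1}\bigl(\lambda,\omega,\vecB\vecmu,\vecB\matsig\vecB^{\intercal},\vecB\vecbeta\bigr).
\]
It remains only to simplify the three transformed parameters. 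Using the block partitions of $\vecmu$, $\vecbeta$, and $\matsig$, direct block-matrix multiplication gives $\vecB\vecmu = \vecmu_1$, $\vecB\vecbeta = \vecbeta_1$, and $\vecB\matsig\vecB^{\intercal} = \matsig_{11}$, which is exactly the claimed parameterization.

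There is essentially no obstacle: once Proposition 1 is in hand, the argument is a one-line reduction plus routine block-matrix bookkeeping. An alternative would be to integrate $\vecY_2$ out of the joint density in \eqref{eqn:ghd3} directly, but that calculation is substantially more tedious and requires manipulating the modified Bessel factor in a non-obvious way; the affine-transformation route is the clean and natural one, and it also makes transparent why the index parameter $\lambda$ and concentration parameter $\omega$ are preserved under marginalization (they depend only on the mixing variable $W$, which is untouched by any affine map of $\vecY$).
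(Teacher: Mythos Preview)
Your proposal is correct and matches the paper's own proof essentially verbatim: the paper also applies Proposition~1 with $\vecB = [\ident_{d_1}\ \mathbf{0}]$ and $\vecb = \mathbf{0}$, and notes that $\lambda$ and $\omega$ are inherited unchanged from the mixing distribution $W \sim \mathcal{I}(\lambda,\eta=1,\omega)$.
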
\begin{proof}
The result follows by applying Proposition 1 and choosing $\vecB = [ \mathbf{I}_{d_1},\mathbf{0}$] and $\vecb=\mathbf{0}$. The parameters $\lambda,\omega$ inherited from the mixing distribution $W \sim \mathcal{I}(\lambda,\eta=1,\omega)$ remain the same under the affine transformation and marginal distribution.
\end{proof}

\begin{prop} The conditional distribution of $\vecY_2$ given $\vecY_1 = \vecy_1$ is a generalized hyperbolic distribution as in \eqref{eqn:ghd}, i.e., $\vecY_2 \mid \vecY_1=\vecy_1 \sim \text{GH}_{d_2}(\lambda_{
2\mid1},\chi_{2\mid1},\psi_{2\mid1},\vecmu_{2\mid1},\matsig_{2\mid1},\vecbeta_{2\mid1})$, where 
\begin{align}
\nonumber
\lambda_{2\mid1}&= \lambda-\frac{d_1}{2},&\chi_{2\mid1}&=\omega+(\vecy_1-\vecmu_1)^{\intercal}\matsig_{11}\inv(\vecy_1-\vecmu_1),\\
\nonumber
\psi_{2\mid1}&=\omega+\vecbeta_1^{\intercal}\matsig_{11}^{\intercal}\vecbeta,&\vecmu_{2\mid1}&=\vecmu_2+\matsig_{12}^{\intercal}\matsig_{11}\inv(\vecy_1-\vecmu_1),\\
\nonumber
\matsig_{2\mid1}&=\matsig_{22}-\matsig_{12}^{\intercal}\matsig_{11}\inv\matsig_{12},&\vecbeta_{2\mid1}&=\vecbeta_2-\matsig_{12}^{\intercal}\matsig_{11}\inv\vecbeta_1.
\end{align}\end{prop}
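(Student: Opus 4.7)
The plan is to exploit the normal mean-variance mixture representation \eqref{eqn:ghd2} together with the standard Gaussian conditioning formulas and a Bayes-style update for the mixing variable $W$. Write $\vecY = \vecmu + W\vecbeta + \sqrt{W}\,\vecU$ with $W \sim \mathcal{I}(\lambda, 1, \omega)$ and $\vecU \sim \mathcal{N}(\mathbf{0}, \matsig)$. Partitioning conformally with $\vecY = (\vecY_1^{\intercal}, \vecY_2^{\intercal})^{\intercal}$, conditional on $W=w$ the vector $\vecY$ is jointly Gaussian with mean $\vecmu + w\vecbeta$ and covariance $w\matsig$.

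The first step is to apply the usual Gaussian conditional formula to $\vecY_2 \mid \vecY_1 = \vecy_1, W = w$. This yields a normal distribution with covariance $w(\matsig_{22} - \matsig_{12}^{\intercal}\matsig_{11}\inv\matsig_{12}) = w\matsig_{2\mid1}$ and mean
\begin{equation*}
\vecmu_2 + w\vecbeta_2 + \matsig_{12}^{\intercal}\matsig_{11}\inv\bigl(\vecy_1 - \vecmu_1 - w\vecbeta_1\bigr) = \vecmu_{2\mid1} + w\vecbeta_{2\mid1},
\end{equation*}
where I have collected the $w$-free and $w$-linear parts so that the intercept matches the claimed $\vecmu_{2\mid1}$ and the slope matches the claimed $\vecbeta_{2\mid1}$. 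The crucial observation is that this conditional mean has exactly the form of a normal mean-variance mixture in $w$.

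The second step is to identify the conditional law of $W$ given $\vecY_1 = \vecy_1$. By Proposition 2, $\vecY_1 \sim \text{GHD}_{d_1}(\lambda, \omega, \vecmu_1, \matsig_{11}, \vecbeta_1)$, and $\vecY_1 \mid W = w \sim \mathcal{N}(\vecmu_1 + w\vecbeta_1, w\matsig_{11})$. Writing down the joint density of $(W, \vecY_1)$ as the product of the GIG prior on $W$ (with parameters $\chi=\psi=\omega$, corresponding to $\eta=1$) and the Gaussian conditional, and collecting the terms in $w$ and $1/w$, the posterior density is, up to a normalizing constant, proportional to
\begin{equation*}
w^{(\lambda - d_1/2) - 1}\exp\!\left\{-\tfrac{1}{2}\!\left[\bigl(\omega + \vecbeta_1^{\intercal}\matsig_{11}\inv\vecbeta_1\bigr) w + \bigl(\omega + \delta(\vecy_1,\vecmu_1 \mid \matsig_{11})\bigr)/w\right]\right\},
\end{equation*}
so that $W \mid \vecY_1 = \vecy_1 \sim \text{GIG}(\lambda_{2\mid1}, \chi_{2\mid1}, \psi_{2\mid1})$ with the stated indices.

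The final step is to combine the two: unconditionally on $W$ but conditionally on $\vecY_1 = \vecy_1$, the random vector $\vecY_2$ admits the representation $\vecY_2 = \vecmu_{2\mid1} + W'\vecbeta_{2\mid1} + \sqrt{W'}\,\vecU'$ where $W' \sim \text{GIG}(\lambda_{2\mid1}, \chi_{2\mid1}, \psi_{2\mid1})$ and $\vecU' \sim \mathcal{N}(\mathbf{0}, \matsig_{2\mid1})$ are independent. Comparing with the defining mixture representation that precedes \eqref{eqn:ghd} identifies this as $\text{GH}_{d_2}(\lambda_{2\mid1}, \chi_{2\mid1}, \psi_{2\mid1}, \vecmu_{2\mid1}, \matsig_{2\mid1}, \vecbeta_{2\mid1})$, as claimed. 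The only genuinely delicate point is the Bayes calculation for $W \mid \vecY_1$; everything else is bookkeeping, but care must be taken that the concentration $\omega$ of the marginal GHD is read off as $\chi=\psi=\omega$ in the first parameterization, since the resulting posterior parameters $\chi_{2\mid1}, \psi_{2\mid1}$ are in general unequal, forcing the conditional law into the \text{GH} (rather than the \text{GHD}) parameterization.
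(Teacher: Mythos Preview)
Your proof is correct and takes a genuinely different route from the paper. The paper proceeds by brute force: it writes the conditional density as the ratio $f_{\vecY_1,\vecY_2}/f_{\vecY_1}$, then decomposes each of the three building blocks $\delta(\vecy,\vecmu\mid\matsig)$, $(\vecy-\vecmu)^{\intercal}\matsig\inv\vecbeta$, and $\vecbeta^{\intercal}\matsig\inv\vecbeta$ using the block-inverse formula for $\matsig\inv$, and finally recognises the quotient as the density \eqref{eqn:ghd}. Your argument instead stays at the level of the mixture representation: Gaussian conditioning on $\{W=w\}$ gives the normal kernel with the right location/scale structure, a Bayes update shows $W\mid\vecY_1$ is again GIG, and mixing the two recovers the GH law directly. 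Your approach is cleaner and explains \emph{why} the conditional law stays generalized hyperbolic (closure of GIG under this Bayes update, plus the affine-in-$w$ form of the Gaussian conditional mean); it also makes the later result \eqref{eqn:wconpdf} on $W\mid\vecy^{\text{o}}$ an immediate by-product rather than a separate calculation. The paper's density-ratio approach, by contrast, produces the explicit functional form of $f_{\vecY_2\mid\vecY_1}$ along the way, which is what is ultimately needed for the E-step expressions, so neither argument dominates the other.
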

The proof of Proposition 3 is given in Appendix~B.

\subsection{The Multivariate Skew-$t$ Distribution}
There are several alternative formulations of multivariate skew-t distributions appearing in the literature \citep*[e.g.,][]{branco01, sahu03, murray14a, lee14}. \citet{lin11} develop a mixture of multivariate skew-t distributions incomplete data using the formulation of \citet{sahu03}. Herein, the formulation of the multivariate skew-$t$ distribution arising from the generalized hyperbolic distribution is used. This formulation of the multivariate skew-$t$ distribution has been used by \citet{murray14a} to develop a mixture of skew-$t$ factor analyzers model.

Following \citet{mcneil05}, a $p$ x $1$ random vector $\vecY$ is said to follow a multivariate skew-t distribution with degree of freedom parameter $v$, location vector $\vecmu$, dispersion matrix $\matsig$, and skewness vector $\vecbeta$, denoted by $\vecY \sim \text{ST}_p(v,\vecmu,\matsig,\vecbeta)$, if it can be represented by
\begin{equation}
\label{eqn:skewtstochastic}
\vecY = \vecmu + W\vecbeta + \sqrt{W}\vecU,
\end{equation}
where $\vecU \bot W$, $W \sim \text{IG}(v/2,v/2)$, $\vecU \sim \mathcal{N}(\mathbf{0},\matsig)$, with $\text{IG}(\cdot)$ denoting the inverse Gamma distribution. It follows that $\vecY \mid w \sim \mathcal{N} (\vecmu + w\vecbeta, w\matsig)$ and the pdf of the multivariate skew-t random vector $\vecY$ is given by
\begin{equation}
	\label{eqn:skewt}
	f(\vecy \mid \varthet) = \left [\frac{v+\delta(\vecy,\vecmu \mid \matsig )}{\vecbeta^{\intercal}\matsig\inv\vecbeta}\right]^{\frac{-v-p}{4}}\frac{v^{v/2}K_{(-v-p)/2}\left(\sqrt{(v +\delta(\vecy,\vecmu \mid \matsig))(\vecbeta^{\intercal}\matsig\inv\vecbeta)}\right)}{(2\pi)^{p/2} |\matsig|^{1/2}\Gamma(v/2)2^{v/2-1}\text{exp}\{-(\vecy-\vecmu)^{\intercal}\matsig\inv\vecbeta\}}.
\end{equation}
This formulation of the multivariate skew-t distribution can be obtained as a special case of the generalized hyperbolic distribution by setting $\lambda=-v/2$ and $\chi=v$, and letting $\psi \to 0$.
Similarly, this formulation of the multivariate skew-t distribution has a closed form under affine transformation and conditioning, and the formation of marginal distributions, which is useful for developing new methods presented later. Suppose that $\vecY$ is a $p$-dimensional random vector having the multivariate skew-t distribution as in \eqref{eqn:skewt}, i.e., $\vecY \sim \text{ST}_p(v,\vecmu,\matsig,\vecbeta)$. Assume that $\vecY$ is partitioned as $\vecY = (\vecY_1^{\intercal}, \vecY_2^{\intercal})^{\intercal}$, where $\vecY_1$ takes values in $\R^{d_1}$ and $\vecY_2$ in $\R^{d_1}=\R^{p-d_1}$, with
\begin{align}
\nonumber
\vecmu = \begin{pmatrix} \vecmu_1 \\ \vecmu_2 \end{pmatrix}&&\vecbeta = \begin{pmatrix} \vecbeta_1 \\ \vecbeta_2 \end{pmatrix}&&\matsig = \begin{pmatrix} \matsig_{11} & \matsig_{12}\\ \matsig_{21} &\matsig_{22} \end{pmatrix},
\end{align}
where $\vecY$, $\vecmu$, and $\vecbeta$ have similar partitions. Furthermore, $\matsig_{11}$ is $d_1\times d_1$ and $\matsig_{22}$ is $d_2\times d_2$.

\begin{prop} Affine transformation of the multivariate skew-t distribution. If $\vecY \sim \text{ST}_p(v,\vecmu,\matsig,\vecbeta)$ and $\vecX = \vecB\vecY+\vecb$, where $\vecB \in \R^{k\times p}$ and $\vecb \in \R^{p}$, then
\begin{equation}
\vecX \sim \text{ST}_k(v,\vecB\vecmu+\vecb,\vecB\matsig\vecB^{\intercal},\vecB\vecbeta).
\end{equation}
\end{prop}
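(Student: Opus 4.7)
The plan is to mirror the proof of Proposition 1 by exploiting the stochastic representation \eqref{eqn:skewtstochastic} of the multivariate skew-$t$ distribution. First I would write $\vecY = \vecmu + W\vecbeta + \sqrt{W}\vecU$ with $W \sim \text{IG}(v/2, v/2)$, $\vecU \sim \mathcal{N}(\mathbf{0}, \matsig)$, and $\vecU \bot W$, and substitute this directly into the definition of $\vecX$ to obtain
\begin{equation*}
\vecX = \vecB\vecY + \vecb = (\vecB\vecmu + \vecb) + W(\vecB\vecbeta) + \sqrt{W}(\vecB\vecU).
\end{equation*}

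Next I would verify that this is precisely the stochastic representation of $\text{ST}_k(v, \vecB\vecmu + \vecb, \vecB\matsig\vecB^{\intercal}, \vecB\vecbeta)$ by checking three items: (i) the mixing weight $W \sim \text{IG}(v/2, v/2)$ is unchanged, since the affine map acts on $\vecY$ alone and leaves the scalar $W$ untouched; (ii) by the standard closure of the multivariate normal under affine maps, $\vecB\vecU \sim \mathcal{N}(\mathbf{0}, \vecB\matsig\vecB^{\intercal})$; (iii) $\vecB\vecU$ and $W$ remain independent, because independence is preserved by measurable transformations of one component. Reading off the location vector $\vecB\vecmu + \vecb$, the skewness vector $\vecB\vecbeta$, the dispersion matrix $\vecB\matsig\vecB^{\intercal}$, and the unchanged degrees of freedom $v$ against the template \eqref{eqn:skewtstochastic} then yields the claim. (Note that for $\vecB\vecY + \vecb$ to be well-defined in $\R^k$ one needs $\vecb \in \R^k$, so the statement's ``$\vecb \in \R^p$'' should be read as $\vecb \in \R^k$, consistent with the analogous qualifier in Proposition 1.)

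As a secondary consistency check, I would recall from the text that the skew-$t$ arises as the limit of the generalized hyperbolic family with $\lambda = -v/2$, $\chi = v$, and $\psi \to 0$; applying Proposition 1 to a GHD-parameter sequence and passing to the limit recovers the same conclusion, with $v$ inherited unchanged from the mixing law. I would not include this in the main argument, but mention it as confirmation that the transformation parameters act exactly as expected.

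I do not anticipate a genuine technical obstacle: the only point requiring care is the invariance of the inverse-Gamma mixing variable under the transformation, and that is immediate because the map is affine on $\vecY$ and does not touch $W$. The proof is therefore essentially a one-line substitution plus the elementary fact that affine images of Gaussians are Gaussian.
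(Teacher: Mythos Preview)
Your proposal is correct and matches the paper's approach exactly: the paper's proof is the single line ``The proof follows easily by substituting \eqref{eqn:skewtstochastic} into $\vecX = \vecB\vecY+\vecb$,'' and your argument is simply a fleshed-out version of that substitution. Your observation about the typo $\vecb \in \R^p$ versus $\vecb \in \R^k$ is also apt.
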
\begin{proof}
The proof follows easily by substituting  \eqref{eqn:skewtstochastic} into $\vecX = \vecB\vecY+\vecb$.
\end{proof}

\begin{prop} The marginal distribution of $\vecY_1$ is a multivariate skew-t distribution as in \eqref{eqn:skewt} with degree of freedom parameter $v$, location vector $\vecmu_1$, dispersion matrix $\matsig_{11}$, and skewness vector $\vecbeta_1$, i.e., $\vecY_1 \sim \text{ST}_{d_1}(v,\vecmu_1,\matsig_{11},\vecbeta_1)$.
\end{prop}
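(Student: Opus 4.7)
The plan is to mirror the proof of Proposition 2, substituting Proposition 4 (the affine invariance of the multivariate skew-t) in place of Proposition 1. The key observation is that the marginal $\vecY_1$ is itself a linear image of $\vecY$, so it suffices to exhibit an appropriate choice of $\vecB$ and $\vecb$ and then read off the induced parameters from Proposition 4.

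First I would set $\vecB = [\, \ident_{d_1}, \mathbf{0}_{d_1\times d_2}\,] \in \R^{d_1\times p}$ and $\vecb = \mathbf{0} \in \R^{d_1}$, so that $\vecB\vecY + \vecb = \vecY_1$ by construction. By Proposition 4, we then have
\begin{equation*}
\vecY_1 \sim \text{ST}_{d_1}\!\bigl(v,\,\vecB\vecmu+\vecb,\,\vecB\matsig\vecB^{\intercal},\,\vecB\vecbeta\bigr).
\end{equation*}
Next I would carry out the routine block-matrix computations to identify the transformed parameters: $\vecB\vecmu = \vecmu_1$, $\vecB\vecbeta = \vecbeta_1$, and using the partition of $\matsig$,
\begin{equation*}
\vecB\matsig\vecB^{\intercal} = [\ident_{d_1},\mathbf{0}] \begin{pmatrix} \matsig_{11} & \matsig_{12}\\ \matsig_{21} &\matsig_{22} \end{pmatrix} \begin{pmatrix} \ident_{d_1}\\ \mathbf{0}\end{pmatrix} = \matsig_{11}.
\end{equation*}
Plugging these three identities into the conclusion of Proposition 4 yields exactly $\vecY_1 \sim \text{ST}_{d_1}(v,\vecmu_1,\matsig_{11},\vecbeta_1)$, which is the claim.

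I would also add a short sentence explaining why the degree-of-freedom parameter $v$ is preserved, to parallel the remark made in the proof of Proposition 2: $v$ is inherited from the mixing distribution $W \sim \text{IG}(v/2,v/2)$ through the stochastic representation \eqref{eqn:skewtstochastic}, and the mixing variable is unaffected by a deterministic affine map applied to $\vecY$. There is no real obstacle here, since the argument is a one-line corollary of Proposition 4; the only thing that needs care is the block-matrix bookkeeping, which is entirely mechanical.
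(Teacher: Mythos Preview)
Your proposal is correct and follows essentially the same approach as the paper: apply Proposition~4 with $\vecB=[\ident_{d_1},\mathbf{0}]$ and $\vecb=\mathbf{0}$, read off the induced parameters, and note that $v$ is inherited from the mixing variable $W\sim\text{IG}(v/2,v/2)$ and hence unchanged under the affine map. The only difference is that you spell out the block-matrix computations explicitly, whereas the paper leaves them implicit.
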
\begin{proof}
The proof follows easily by applying Proposition 4 and choosing $\vecB = [ \mathbf{I}_{d_1},\mathbf{0}$] and $\vecb=\mathbf{0}$. The degree of freedom parameter $v$ inherited from the mixing distribution $W \sim\text{IG}(v/2,v/2)$ remains invariant under affine transformation and marginal distribution.
\end{proof}

\begin{prop} The conditional distribution of $\vecY_2$ given $\vecY_1 = \vecy_1$ is a generalized hyperbolic distribution as in \eqref{eqn:ghd}, i.e., $\vecY_2 \mid \vecy_1 \sim \text{GH}_{d_2}(\lambda_{
2\mid1},\chi_{2\mid1},\psi_{2\mid1},\vecmu_{2\mid1},\matsig_{2\mid1},\vecbeta_{2\mid1})$, where 
\begin{align}
\nonumber
\lambda_{2\mid1}&= -(v+d_1)/2,&\chi_{2\mid1}&=v+(\vecy_1-\vecmu_1)^{\intercal}\matsig_{11}\inv(\vecy_1-\vecmu_1),\\
\nonumber
\psi_{2\mid1}&=\vecbeta_1^{\intercal}\matsig_{11}^{\intercal}\vecbeta,&\vecmu_{2\mid1}&=\vecmu_2+\matsig_{12}^{\intercal}\matsig_{11}\inv(\vecy_1-\vecmu_1),\\
\nonumber
\matsig_{2\mid1}&=\matsig_{22}-\matsig_{12}^{\intercal}\matsig_{11}\inv\matsig_{12},&\vecbeta_{2\mid1}&=\vecbeta_2-\matsig_{12}^{\intercal}\matsig_{11}\inv\vecbeta_1.
\end{align}\end{prop}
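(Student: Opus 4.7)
My plan is to mirror the proof of Proposition 3, but with the GIG mixing distribution replaced by the inverse-gamma $W \sim \text{IG}(v/2, v/2)$ appearing in the stochastic representation \eqref{eqn:skewtstochastic}. The key observation is that $\text{IG}(v/2, v/2) = \text{GIG}(-v/2, v, 0)$, so the skew-$t$ is really the generalized hyperbolic distribution with $\lambda = -v/2$, $\chi = v$, and $\psi = 0$; the claimed conditional parameters should fall out by substituting these three values into the conditional formulas of Proposition 3 (or, equivalently, by redoing the Bayesian computation from scratch).

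The forward-going steps would be as follows. First, condition on the latent $W = w$: from $\vecY \mid w \sim \mathcal{N}(\vecmu + w\vecbeta, w\matsig)$ and the standard Gaussian conditioning formula applied to the partition $\vecY = (\vecY_1^\intercal, \vecY_2^\intercal)^\intercal$, obtain
\begin{equation*}
\vecY_2 \mid \vecY_1 = \vecy_1, W = w \ \sim \ \mathcal{N}\!\bigl(\vecmu_{2\mid 1} + w\vecbeta_{2\mid 1},\, w\matsig_{2\mid 1}\bigr),
\end{equation*}
with $\vecmu_{2\mid 1}$, $\vecbeta_{2\mid 1}$, and $\matsig_{2\mid 1}$ as in the statement; here the linear dependence on $w$ in both the mean and the covariance is crucial because it preserves a GHD-type convolution structure after mixing over $W$. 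Second, use Bayes' rule to derive the posterior of $W$ given $\vecY_1 = \vecy_1$: combining the Gaussian density of $\vecY_1 \mid W = w$, which contributes a factor $w^{-d_1/2}\exp\{-\tfrac{1}{2w}(\vecy_1 - \vecmu_1)^\intercal\matsig_{11}\inv(\vecy_1 - \vecmu_1) + (\vecy_1-\vecmu_1)^\intercal\matsig_{11}\inv\vecbeta_1 - \tfrac{w}{2}\vecbeta_1^\intercal\matsig_{11}\inv\vecbeta_1\}$, with the inverse-gamma prior, which contributes $w^{-v/2 - 1}\exp\{-v/(2w)\}$, one reads off a GIG kernel and identifies
\begin{equation*}
W \mid \vecY_1 = \vecy_1 \ \sim \ \text{GIG}\bigl(-(v+d_1)/2,\ v + (\vecy_1-\vecmu_1)^\intercal\matsig_{11}\inv(\vecy_1-\vecmu_1),\ \vecbeta_1^\intercal\matsig_{11}\inv\vecbeta_1\bigr),
\end{equation*}
which gives precisely the claimed $\lambda_{2\mid 1}$, $\chi_{2\mid 1}$, and $\psi_{2\mid 1}$. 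Third, assemble: since $\vecY_2 \mid \vecY_1 = \vecy_1$ admits the hierarchical representation $\mathcal{N}(\vecmu_{2\mid 1} + W\vecbeta_{2\mid 1}, W\matsig_{2\mid 1})$ with $W \mid \vecY_1 = \vecy_1$ GIG as above, the defining stochastic construction of the GHD in \eqref{eqn:ghd2} shows that this conditional is exactly $\text{GH}_{d_2}(\lambda_{2\mid 1}, \chi_{2\mid 1}, \psi_{2\mid 1}, \vecmu_{2\mid 1}, \matsig_{2\mid 1}, \vecbeta_{2\mid 1})$.

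The main obstacle I anticipate is the bookkeeping in step two: one must carefully expand the quadratic form inside the Gaussian so that the cross term $(\vecy_1 - \vecmu_1)^\intercal\matsig_{11}\inv\vecbeta_1$ (which is free of $w$) separates out and gets absorbed into the normalizing constant, leaving only the two terms proportional to $1/w$ and $w$ that match the GIG exponential. The rest is a matter of collecting powers of $w$ to read off the GIG index. An equally valid alternative, which I would mention as a sanity check, is to invoke a version of Proposition 3 that is stated directly in the $(\lambda,\chi,\psi)$ parameterization of \eqref{eqn:ghd} and specialize to $\lambda = -v/2$, $\chi = v$, $\psi = 0$; the resulting parameters coincide termwise with those in the statement, confirming the computation.
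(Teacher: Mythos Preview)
Your argument is correct, and it takes a genuinely different route from the paper's. The paper omits the proof of Proposition~6, saying only that it is ``similar to that for Proposition~3''; and the proof of Proposition~3 (Appendix~C) proceeds by forming the density ratio $f_{\vecY_1,\vecY_2}/f_{\vecY_1}$ explicitly, then decomposing the Mahalanobis distance $\delta(\vecy,\vecmu\mid\matsig)$, the linear form $(\vecy-\vecmu)^\intercal\matsig\inv\vecbeta$, and the quadratic $\vecbeta^\intercal\matsig\inv\vecbeta$ via the block-inverse formula \eqref{eqn:siginverse}, until the result matches the GHD density \eqref{eqn:ghd} with the stated conditional parameters. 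By contrast, you work entirely through the hierarchical (normal variance--mean mixture) representation: Gaussian conditioning in the $w$-conditional world yields $\vecmu_{2\mid1},\vecbeta_{2\mid1},\matsig_{2\mid1}$ directly, and a Bayes update on $W$ identifies the GIG posterior giving $\lambda_{2\mid1},\chi_{2\mid1},\psi_{2\mid1}$. Your approach is shorter and more conceptual---it never touches Bessel functions or the block-inverse identity---while the paper's density-ratio computation is more mechanical but has the virtue of being self-contained at the level of pdfs. Your ``sanity check'' of specializing Proposition~3 to $\lambda=-v/2$, $\chi=v$, $\psi\to 0$ is also valid, though note that the paper's Proposition~3 is stated in the $(\lambda,\omega)$ parameterization, so one would first need the analogous statement in the $(\lambda,\chi,\psi)$ parameterization of \eqref{eqn:ghd} before substituting.
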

The proof of Proposition 6 is similar to that for Proposition 3, hence is omitted. Similar results for Proposition~4, 5, and 6 have been obtained in \citet{arellano10}. 

\section{MGHD with Incomplete Data}\label{sec:mghd}
Let $\vecY_1,\ldots,\vecY_n$ be $p$-dimensional random variables arising from a heterogeneous population with $G$ disjoint MGHD subpopulations. That is, each $\vecY_i$ has the density
\begin{equation}
\label{eqn:mghd}
f_{\text{MGHD}}(\vecy_i\mid\matTheta) = \gsum \pi_g f_{\text{GHD}} (\vecy_i\mid\lambda_g,\omega_g,\vecmu_g,\matsig_g,\vecbeta_g),
\end{equation}
where $\pi_g>0$, such that $\gsum \pi_g=1$, are the mixing proportions, $\matTheta$ denotes the model parameters, 
and $f_{\text{GHD}} (\vecY_i\mid\lambda_g,\omega_g,\vecmu_g,\matsig_g,\vecbeta_g)$ is the GHD density defined in \eqref{eqn:ghd3}.

To apply the MGHD model \eqref{eqn:mghd} in the clustering paradigm, introduce $\vecz_{i} = (z_{i1},\ldots,z_{ig})^{\intercal}$, where $z_{ig}=1$ if observation $i$ is in component $g$ and $z_{ig}=0$ otherwise. The corresponding random variable $\vecZ_i\sim\mathcal{M}(1;\pi_1,\ldots,\pi_G)$, i.e., $\vecZ_i$ follows a multinomial distribution with one trial and cell probabilities $\pi_1,\ldots,\pi_G$.

A three-level hierarchical representation of the MGHD model \eqref{eqn:mghd} can be expressed by
\begin{align}
\nonumber
\vecY_i\mid w_{ig},z_{ig}=1 &\sim \mathcal{N}(\vecmu_g+w_{ig}\vecbeta_g,w_{ig}\matsig_g),\\
\label{eqn:mghdthreelevel}
W_{ig}\mid z_{ig}=1&\sim \mathcal{I}(\lambda_g,\eta=1,\omega_g),\\
\nonumber
\vecZ_i&\sim\mathcal{M}(1;\pi_1,\ldots,\pi_G).
\end{align}
The complete-data consist of the observed $\vecy_i$ together with the missing group membership $z_{ig}$ and the latent $w_{ig}$, for $i=1,\ldots,n$ and $g=1,\ldots,G$, and the complete-data log-likelihood is given by
\begin{equation}
\label{eqn:mghdloglik}
l_{\text{c}}(\matTheta) = \isum\gsum z_{ig}\left[\log\pi_g+\log\phi(\vecy_i\mid\vecmu_g+w_{ig}\vecbeta_g,w_{ig}\matsig_g)+\log h(w_{ig}\mid\lambda_g,\omega_g)\right].
\end{equation}

\citet{browne15} present an EM algorithm for parameter estimation with the MGHD when there is no missing data in $\vecy_1,\ldots,\vecy_n$. We are interested in parameter estimation for the MGHD model \eqref{eqn:mghd} when $\vecy_1,\ldots,\vecy_n$ are partially observed with arbitrary missing patterns. The missing data mechanism is assumed to be MAR. Assume now that we split $\vecy_i$ into two components, $\vecy_i^{\text{o}}$ and $\vecy_i^{\text{m}}$ that denote the observed and missing components of $\vecy_i$, respectively. In general, each data vector $\vecy_i$ may have a different pattern of missing features, i.e., $\vecy_i=(\vecy_i^{\text{o}_i\intercal},\vecy_i^{\text{m}_i\intercal})^{\intercal}$, but can be simplified for the sake of clarity. 

For each $\vecy_i=(\vecy_i^{\text{o}\intercal},\vecy_i^{\text{m}\intercal})^{\intercal}$, partition the vector mean $\vecmu_g=(\vecmu_{g,i}^{\text{o}\intercal},\vecmu_{g,i}^{\text{m}\intercal})^{\intercal}$, where $\vecmu_{g,i}^{\text{o}}$ and $\vecmu_{g,i}^{\text{m}}$ denote the sub-vectors of $\vecmu_g$ matching the observed and missing components of $\vecy_i$, respectively. Similarly, the skewness vector is $\vecbeta_g=(\vecbeta_{g,i}^{\text{o}\intercal},\vecbeta_{g,i}^{\text{m}\intercal})^{\intercal}$ and  the covariance matrix $\matsig_g$ as 
\begin{equation}
\label{eqn:matsigmiss}
\matsig_g=\begin{pmatrix}\matsig_{g,i}^{\text{oo}}&\matsig_{g,i}^{\text{om}}\\\matsig_{g,i}^{\text{mo}}&\matsig_{g,i}^{mm}\end{pmatrix} \text{and}\quad 
\matsig_g\inv=\begin{pmatrix}(\matsig_{g,i}^{\text{oo}})^{-1}&(\matsig_{g,i}^{\text{om}})^{-1}\\
(\matsig_{g,i}^{\text{mo}})^{-1}&(\matsig_{g,i}^{\text{mm}})^{-1}\end{pmatrix},
\end{equation}
correspond to $\vecy_i=(\vecy_i^{\text{o}\intercal},\vecy_i^{\text{m}\intercal})^{\intercal}$. As a result, in addition to the observed $\vecy_i^{\text{o}}$, the missing group membership $z_{ig}$, and the latent variable $w_{ig}$, the complete-data also include the missing data $\vecy_i^{\text{m}}$. In the framework of the EM algorithm, the missing data $\vecy_i^{\text{m}}$ are considered to be random variables that are updated in each iteration. Hence, the complete-data log-likelihood \eqref{eqn:mghdloglik} is rewritten as
\begin{align}
\label{eqn:mghdloglikmiss}
\nonumber
l_{\text{c}}(\matTheta) = \isum\gsum z_{ig}\big[\log\pi_g+&\log\phi(\vecy_i^{\text{o}},\vecy_i^{\text{m}}\mid\vecmu_g+w_{ig}\vecbeta_g,w_{ig}\matsig_g)
+\log h_{\mathcal{I}}(w_{ig}\mid\lambda_g,\omega_g)\big].\end{align}
Given (\ref{eqn:mghdthreelevel}), we establish the following:
\begin{itemize}
\item The marginal distribution of $\vecY_i^{\text{o}}$ given is 
	\begin{equation}
	\nonumber
	\vecY_i^{\text{o}} \sim \gsum \pi_g f_{\text{GHD},p_i^{\text{o}}}(\lambda_g,\omega_g,\vecmu_{g,i}^{\text{o}}, \matsig_{g,i}^{\text{oo}}, \vecbeta_{g,i}^{\text{o}}),
	\end{equation}
	where $p_i^{\text{o}}$ is the dimension corresponding to the observed component $\vecy_i^{\text{o}}$, which should be exactly written as $p_i^{\text{o}_i}$ but here is simplified.
\item The conditional distribution of $\vecY_i^{\text{m}}$ given $\vecy_i^{\text{o}}$ and $z_{ig}=1$, according to Proposition~3, is 
	\begin{equation}
	\label{eqn:ymissconpdf}
	\vecY_i^{\text{m}} \mid \vecy_i^{\text{o}}, z_{ig}=1\sim \text{GH}_{p-p_i^{\text{o}}}\left(\lambda_{g,i}^{\text{m}\mid \text{o}},\chi_{g,i}^{\text{m}\mid \text{o}},\psi_{g,i}^{\text{m}\mid \text{o}},\vecmu_{g,i}^{\text{m}\mid \text{o}},\matsig_{g,i}^{\text{m}\mid \text{o}},\vecbeta_{g,i}^{\text{m}\mid \text{o}}\right), 
	\end{equation}
	where
	\begin{align*}
	\lambda_{g,i}^{\text{m}\mid \text{o}}&= \lambda_g-\frac{p_i^{\text{o}}}{2},&\chi_{g,i}^{\text{m}\mid \text{o}}&=\omega_g+(\vecy_i^{\text{o}}-\vecmu_{g,i}^{\text{o}})^{\intercal}(\matsig_{g,i}^{\text{oo}})\inv(\vecy_i^{\text{o}}-\vecmu_{g,i}^{\text{o}}),\\
	\psi_{g}^{\text{m}\mid \text{o}}&=\omega_g+(\vecbeta_{g,i}^{\text{o}})^{\intercal}(\matsig_{g,i}^{\text{oo}})\inv\vecbeta_{g,i}^{\text{o}},&\vecmu_{g,i}^{\text{m}\mid \text{o}}&=\vecmu_g^{\text{m}}+(\matsig_{g,i}^{\text{om}})^{\intercal}(\matsig_{g,i}^{\text{oo}})\inv(\vecy_i^{\text{o}}-\vecmu_{g,i}^{\text{o}}),\\
	\matsig_{g,i}^{\text{m}\mid \text{o}}&=\matsig_{g,i}^{\text{mm}}-(\matsig_{g,i}^{\text{om}})^{\intercal}(\matsig_{g,i}^{\text{oo}})\inv\matsig_{g,i}^{\text{om}},&\vecbeta_{g,i}^{\text{m}\mid \text{o}}&=\vecbeta_{g,i}^{\text{m}}-(\matsig_{g,i}^{\text{om}})^{\intercal}(\matsig_{g,i}^{\text{oo}})\inv\vecbeta_{g,i}^{\text{o}}.
	\end{align*}
\item The conditional distribution of $\vecY_i^{\text{m}}$ given $\vecy_i^{\text{o}}, w_{ig}$, and $z_{ig}=1$ is 
        \begin{equation}
        \vecY_i^{\text{m}} \mid \vecy_i^{\text{o}}, w_{ig}, z_{ig}=1\sim \mathcal{N}_{p-p_i^{\text{o}}}(\vecmu_{g,i}^{\text{m}\mid \text{o}}+w_{ig}\vecbeta_{g,i}^{\text{m}\mid \text{o}}, w_{ig}\matsig_{g,i}^{\text{m}\mid \text{o}}).
        \end{equation}
\item The conditional distribution of $W_i$ given $\vecy_i^{\text{o}}$ and $z_{ig}=1$ is
	\begin{equation}
	\label{eqn:wconpdf}
	W_{ig} \mid \vecy_i^{\text{o}},z_{ig}=1 \sim \text{GIG}\left(\omega_g+(\vecbeta_{g,i}^{\text{o}})^{\intercal}(\matsig_{g,i}^{\text{oo}})\inv\vecbeta_{g,i}^{\text{o}}, \omega_g+\delta(\vecy_i^{\text{o}},\vecmu_{g,i}^{\text{o}}\mid \matsig_{g,i}^{\text{oo}}), \lambda_g-\frac{p_i^{\text{o}}}{2}\right).
	\end{equation} 
\end{itemize}
After a little algebra, we get the complete data log-likelihood function is
\begin{equation}
\label{eqn:loglikelihood}
\begin{split}
&l_{\text{c}}(\matTheta) =\isum\gsum z_{ig}\log\pi_g+\isum\gsum z_{ig}\left[ -\frac{p}{2}\log(2\pi)-\frac{p}{2}\log w_{ig}+\frac{1}{2}\log |\matsig_g\inv|\right]\\
&-\frac{1}{2}\isum\gsum\text{tr}\left( \matsig_g\inv z_{ig} \frac{1}{w_{ig}}\begin{pmatrix} (\vecy_i^{\text{o}}-\vecmu_{g,i}^{\text{o}})(\vecy_i^{\text{o}}-\vecmu_{g,i}^{\text{o}})^{\intercal} &(\vecy_i^{\text{o}}-\vecmu_{g,i}^{\text{o}})(\vecy_i^{\text{m}}-\vecmu_{g,i}^{\text{m}})^{\intercal}\\(\vecy_i^{\text{m}}-\vecmu_{g,i}^{\text{m}})^{\intercal}(\vecy_i^{\text{o}}-\vecmu_{g,i}^{\text{o}})&(\vecy_i^{\text{m}}-\vecmu_{g,i}^{\text{m}})(\vecy_i^{\text{m}}-\vecmu_{g,i}^{\text{m}})^{\intercal}\end{pmatrix} \right)\\
&+\frac{1}{2}\isum\gsum\text{tr}\left(  \matsig_g\inv z_{ig} \begin{pmatrix}\vecbeta_{g,i}^{\text{o}}\\\vecbeta_{g,i}^{\text{m}}\end{pmatrix}\begin{pmatrix} (\vecy_i^{\text{o}}-\vecmu_{g,i}^{\text{o}})^{\intercal}&(\vecy_i^{\text{m}}-\vecmu_{g,i}^{\text{m}})^{\intercal}\end{pmatrix}\right)\\
&+\frac{1}{2}\isum\gsum\text{tr}\left(  \matsig_g\inv z_{ig} \begin{pmatrix} \vecy_i^{\text{o}}-\vecmu_{g,i}^{\text{o}}\\\vecy_i^{\text{m}}-\vecmu_{g,i}^{\text{m}}\end{pmatrix}\begin{pmatrix}(\vecbeta_{g,i}^{\text{o}})^{\intercal}&(\vecbeta_{g,i}^{\text{m}})^{\intercal}\end{pmatrix}\right)-\frac{1}{2}\isum\gsum z_{ig}w_{ig}\vecbeta_{g,i}^{\intercal}\matsig_g\inv\vecbeta_{g,i}\\
&+\isum\gsum z_{ig}\left[(\lambda_g-1)\log w_{ig}-\log(2K_{\lambda_g}(\omega_g))-\frac{\omega_g}{2}\left(w_{ig}+\frac{1}{w_{ig}}\right)\right].
\end{split}
\end{equation}

On the $k$th iteration of the E-step, the expected value of the complete data log-likelihood is computed given the observed data $\vecy_1^{\text{o}},\ldots,\vecy_n^{\text{o}}$ and the current parameter updates $\matTheta^{(k)}$. That is, we need to compute $\E(Z_{ig}\mid \vecy_i^{\text{o}};\matTheta^{(k)})$, $\E(W_{ig}\mid \vecy_i^{\text{o}},z_{ig}=1;\matTheta^{(k)})$, $\E(\log W_{ig}\mid\vecy_i^{\text{o}},z_{ig}=1;\matTheta^{(k)})$, $\E({1}/{W_{ig}}\mid \vecy_i^{\text{o}},z_{ig}=1;\matTheta^{(k)})$, $\E(\vecY_i^{\text{m}}\mid\vecy_i^{\text{o}},z_{ig}=1,w_i;\matTheta^{(k)})$, and $\E(\vecY_i^{\text{m}} (\vecY_i^{\text{m}})^{\intercal}\mid\vecy_i^{\text{o}},z_{ig}=1,w_i;\matTheta^{(k)})$.

First, let $\hat{z}_{ig}^{(k)}$ denote the \textit{a~posteriori} probability that $i$-th observation belongs to the $g$-th component of the mixture, based on the observed data:
\begin{equation}
\nonumber
\hat{z}_{ig}^{(k)} \colonequals \E(Z_{ig}\mid \vecy_i^{\text{o}},\matTheta^{(k)})=\frac{\pi_g^{(k)}f_{\text{GHD},p_i^{\text{o}}}(\vecy_i^{\text{o}};\lambda_g^{(k)},\omega_g^{(k)},\vecmu_{g,i}^{\text{o}(k)},\matsig_{g,i}^{\text{oo}(k)},\vecbeta_{g,i}^{\text{o}(k)})}{\sum_{l=1}^{G}\pi_l^{(k)}f_{\text{GHD},p_i^{\text{o}}}(\vecy_i^{\text{o}};\lambda_l^{(k)},\omega_l^{(k)},\vecmu_{l,i}^{\text{o}(k)},\matsig_{l,i}^{\text{oo}(k)},\vecbeta_{l,i}^{\text{o}(k)})}.
\end{equation}
Given (\ref{eqn:expgig}), (\ref{eqn:expgiglog}), and (\ref{eqn:wconpdf}), we have the following expectations as to the latent variable $W$:
\begin{align*}
a_{ig}^{(k)}&\colonequals\E(W_{ig}\mid \vecy_i^{\text{o}},z_{ig}=1;\matTheta^{(k)}) = \sqrt{\frac{\omega_g^{(k)}+\delta(\vecy_i^{\text{o}},\vecmu_{g,i}^{\text{o}(k)}\mid\matsig_{g,i}^{\text{oo}(k)})}{\omega_g^{(k)}+\vecbeta_{g,i}^{o(k)\intercal}(\matsig_{g,i}^{\text{oo}(k)})\inv\vecbeta_{g,i}^{\text{o}(k)}}}\\
&\qquad\times\frac{K_{\lambda_g^{(k)}-\frac{p_i^0}{2}+1}\left(\sqrt{(\omega_g^{(k)}+\delta(\vecy_i^{\text{o}},\vecmu_{g,i}^{\text{o}(k)}\mid\matsig_{g,i}^{\text{oo}(k)}))(\omega_g^{(k)}+(\vecbeta_{g,i}^{\text{o}(k)})^{\intercal}(\matsig_{g,i}^{\text{oo}(k)})\inv\vecbeta_{g,i}^{\text{o}(k)})}\right)}{K_{\lambda_g^{(k)}-\frac{p_i^0}{2}}\left(\sqrt{(\omega_g^{(k)}+\delta(\vecy_i^{\text{o}},\vecmu_{g,i}^{\text{o}(k)}\mid\matsig_{g,i}^{\text{oo}(k)}))(\omega_g^{(k)}+(\vecbeta_{g,i}^{\text{o}(k)})^{\intercal}(\matsig_{g,i}^{\text{oo}(k)})\inv\vecbeta_{g,i}^{\text{o}(k)})}\right)},
\end{align*} 
\begin{align*}
b_{ig}^{(k)}&\colonequals\E({1}/{W_{ig}}\mid \vecy_i^{\text{o}},z_{ig}=1;\matTheta^{(k)}) \\
&= -\frac{2\lambda_g^{(k)}-p_i^{\text{o}}}{\omega_g^{(k)}+\delta(\vecy_i^{\text{o}},\vecmu_{g,i}^{\text{o}(k)}\mid\matsig_{g,i}^{\text{oo}(k)})}+\sqrt{\frac{\omega_g^{(k)}+(\vecbeta_{g,i}^{\text{o}(k)})^{\intercal}(\matsig_{g,i}^{\text{oo}(k)})\inv\vecbeta_{g,i}^{\text{o}(k)}}{\omega_g^{(k)}+\delta(\vecy_i^{\text{o}},\vecmu_{g,i}^{\text{o}(k)}\mid\matsig_{g,i}^{\text{oo}(k)})}}\\
&\qquad\times\frac{K_{\lambda_g^{(k)}-\frac{p_i^0}{2}+1}\left(\sqrt{(\omega_g^{(k)}+\delta(\vecy_i^{\text{o}},\vecmu_{g,i}^{\text{o}(k)}\mid\matsig_{g,i}^{\text{oo}(k)}))(\omega_g^{(k)}+(\vecbeta_{g,i}^{\text{o}(k)})^{\intercal}(\matsig_{g,i}^{\text{oo}(k)})\inv\vecbeta_{g,i}^{\text{o}(k)})}\right)}{K_{\lambda_g^{(k)}-\frac{p_i^0}{2}}\left(\sqrt{(\omega_g^{(k)}+\delta(\vecy_i^{\text{o}},\vecmu_{g,i}^{\text{o}(k)}\mid\matsig_{g,i}^{\text{oo}(k)}))(\omega_g^{(k)}+(\vecbeta_{g,i}^{\text{o}(k)})^{\intercal}(\matsig_{g,i}^{\text{oo}(k)})\inv\vecbeta_{g,i}^{\text{o}(k)})}\right)},\\
c_{ig}^{(k)}&\colonequals\E(\log W_{ig}\mid \vecy_i^{\text{o}},z_{ig}=1;\matTheta^{(k)})=\log\left( \sqrt{\frac{\omega_g^{(k)}+\delta(\vecy_i^{\text{o}},\vecmu_{g,i}^{\text{o}(k)}\mid\matsig_{g,i}^{\text{oo}(k)})}{\omega_g^{(k)}+(\vecbeta_{g,i}^{\text{o}(k)})^{\intercal}(\matsig_{g,i}^{\text{oo}(k)})\inv\vecbeta_{g,i}^{\text{o}(k)}}} \right)\\
&+\left.\frac{\partial}{\partial t}\log\left\{K_{t}\left(\sqrt{(\omega_g^{(k)}+\delta(\vecy_i^{\text{o}},\vecmu_{g,i}^{\text{o}(k)}\mid\matsig_{g,i}^{\text{oo}(k)}))(\omega_g^{(k)}+(\vecbeta_{g,i}^{\text{o}(k)})^{\intercal}(\matsig_{g,i}^{\text{oo}(k)})\inv\vecbeta_{g,i}^{\text{o}(k)})}\right)\right\}\right\rvert_{t=(\lambda_g^{(k)}-\frac{p_i^{\text{o}}}{2})}.
\end{align*} 
For convenience, we use the following notation analogous to \citet{browne15}: $n_g^{(k)} = \isum \hat{z}_{ig}^{(k)}$, $\bar{a}_g^{(k)} = 1/n_g^{(k)}\isum \hat{z}_{ig}^{(k)}a_{ig}^{(k)}$, $\bar{b}_g^{(k)} = 1/n_g^{(k)}\isum \hat{z}_{ig}^{(k)}b_{ig}^{(k)}$, and $\bar{c}_g^{(k)} = 1/n_g^{(k)}\isum \hat{z}_{ig}^{(k)}c_{ig}^{(k)}$. For the actual missing data $\vecY^{\text{m}}$, we will also need the following expectations:
\begin{align*}
\label{eqn:expymiss}
\hat{\vecy}_{ig}^{\text{m}(k)}&\colonequals\E(\vecY_i^{\text{m}}\mid\vecy_i^{\text{o}},z_{ig}=1)=\vecmu_{g,i}^{\text{m}\mid \text{o}(k)}+a_{ig}^{(k)}\vecbeta_{g,i}^{\text{m}\mid \text{o}(k)},\\
\tilde{\vecy}_{ig}^{\text{m}(k)}&\colonequals\E(({1}/{W_{i}})\vecY_i^{\text{m}}\mid\vecy_i^{\text{o}},z_{ig}=1)=b_{ig}^{(k)}\vecmu_{g,i}^{\text{m}\mid \text{o}(k)}+\vecbeta_{g,i}^{\text{m}\mid \text{o}(k)},\\
\tilde{\tilde{\vecy}}_{ig}^{\text{m}(k)}&\colonequals\E(({1}/{W_{i}})\vecY_i^{\text{m}}\vecY_i^{\text{m}\intercal}
\mid\vecy_i^{\text{o}},z_{ig}=1)=\matsig_{g,i}^{\text{m}\mid \text{o}(k)}+b_{ig}^{(k)}\vecmu_{g,i}^{\text{m}\mid \text{o}(k)}(\vecmu_{g,i}^{\text{m}\mid \text{o}(k)})^{\intercal}\\
&\quad+\vecmu_{g,i}^{\text{m}\mid \text{o}(k)}(\vecbeta_{g,i}^{\text{m}\mid \text{o}(k)})^{\intercal}+\vecbeta_{g,i}^{\text{m}\mid \text{o}(k)}(\vecmu_{g,i}^{\text{m}\mid \text{o}(k)})^{\intercal}+a_{ig}^{(k)}\vecbeta_{g,i}^{\text{m}\mid \text{o}(k)}(\vecbeta_{g,i}^{\text{m}\mid \text{o}(k)})^{\intercal}.
\end{align*} 

On the $k$-th iteration of the M-step, the expected value of the complete data log-likelihood is maximized to get the updates for the parameter estimates as follows:
\begin{align*}
\pi_g^{(k+1)} &= \frac{n_g^{(k)}}{n},\\
{\vecmu}_g^{(k+1)} &= \frac{1}{\isum\hat{z}_{ig}^{(k)}(\bar{a}_g^{(k)}b_{ig}^{(k)}-1)}\isum \hat{z}_{ig}^{(k)}\begin{pmatrix}(\bar{a}_g^{(k)}b_{ig}^{(k)}-1)\vecy_i^{\text{o}}\\\bar{a}_g^{(k)}\tilde{\vecy}_{ig}^{\text{m}(k)}-\hat{\vecy}_{ig}^{\text{m}(k)}\end{pmatrix},\\
{\vecbeta}_g^{(k+1)}&=\frac{1}{\isum\hat{z}_{ig}^{(k)}(\bar{a}_g^{(k)}b_{ig}^{(k)}-1)}\isum \hat{z}_{ig}^{(k)}\begin{pmatrix}(\bar{b}_g^{(k)}-b_{ig}^{(k)})\vecy_i^{\text{o}}\\\bar{b}_g^{(k)}\hat{\vecy}_{ig}^{\text{m}(k)}-\tilde{\vecy}_{ig}^{\text{m}(k)}\end{pmatrix},\\
{\matsig}_g^{(k+1)}&=\frac{1}{n_g^{(k)}}\isum\hat{z}_{ig}^{(k)}{\matsig}_{ig}^{(k+1)}-(\bar{\vecy}_g-{\vecmu}_g^{(k+1)}){\vecbeta}_g^{(k+1)\intercal}-{\vecbeta}_g^{(k+1)}(\bar{\vecy}_g-{\vecmu}_g^{(k+1)})^{\intercal}+\bar{a}_g^{(k)}{\vecbeta}_g^{(k+1)}{\vecbeta}_g^{(k+1)\intercal},
\end{align*}
where 
\begin{align*}
\bar{\vecy}_g&=\frac{1}{n_g^{(k+1)}}\isum\hat{z}_{ig}^{(k+1)}\begin{pmatrix}\vecy_i^{\text{o}}\\\hat{\vecy}_{ig}^{\text{m}(k+1)}\end{pmatrix},\\[+4pt]
{\matsig}_{ig}^{(k+1)} &= \begin{pmatrix}b_{ig}^{(k)}(\vecy_i^{\text{o}}-\vecmu_g^{\text{o}(k+1)})(\vecy_i^{\text{o}}-\vecmu_g^{\text{o}(k+1)})^{\intercal}&(\vecy_i^{\text{o}}-\hat{\vecmu}_g^{\text{o}(k+1)})(\tilde{\vecy}_{ig}^{\text{m}(k)}-b_{ig}^{(k)}\hat{\vecmu}_g^{\text{m}(k+1)})^{\intercal}\\(\tilde{\vecy}_{ig}^{\text{m}(k)}-b_{ig}^{(k)}\hat{\vecmu}_g^{\text{m}(k+1)})(\vecy_i^{\text{o}}-\vecmu_g^{\text{o}(k+1)})^{\intercal}&\mathbf{k}_{ig}^{\text{m}(k+1)}\end{pmatrix},
\end{align*} and
$$\mathbf{k}_{ig}^{\text{m}(k+1)}=\tilde{\tilde{\vecy}}_{ig}^{\text{m}(k)}-\tilde{\vecy}_{ig}^{\text{m}(k)}\hat{\vecmu}_g^{\text{m}(k+1)\intercal}-\hat{\vecmu}_g^{\text{m}(k+1)}\tilde{\vecy}_i^{\text{m}(k)\intercal}+b_{ig}^{(k)}\hat{\vecmu}_g^{\text{m}(k+1)}\hat{\vecmu}_g^{\text{m}(k+1)\intercal}.$$

Finally, the estimates of $\lambda_g^{(k+1)}$ and $\omega_g^{(k+1)}$ are given as solutions to maximize the function
\begin{equation}
\nonumber
q_g(\lambda_g,\omega_g) = -\log(K_{\lambda_g}(\omega_g))+(\lambda_g-1)\bar{c}_g-\frac{\omega_g}{2}(\bar{a}_g+\bar{b}_g),
\end{equation}
 and the associated updates are
 \begin{align*}
 {\lambda}_g^{(k+1)} &= \bar{c}_g^{(k)}{\lambda}_g^{(k)}\left[\frac{\partial}{\partial{\lambda}_g^{(k)}}\log\left(K_{{\lambda}_g^{(k)}}({\omega}_g^{(k)})\right)\right]\inv,\\
 {\omega}_g^{(k+1)} &={\omega}_g^{(k)}-\left[\frac{\partial}{\partial{\omega}_g^{(k)}}q_g({\lambda}_g^{(k+1)},{\omega}_g^{(k)})\right]\left[\frac{\partial^2}{\partial{\omega}_g^{2(k)}}q_g({\lambda}_g^{(k+1)},{\omega}_g^{(k)})\right]\inv.
 \end{align*}

The family of MGHD mixture models, with 14 parsimonious eigen-decomposed scaled covariance matrices corresponding to the famous GPCM family of models is proposed (see Appendix~\ref{sec:gpcm} for a brief discussion, including nomenclature). Details on the MST with incomplete data are analogous to the MGHD with incomplete data and are provided in Appendix~\ref{sec:MST}.

\section{Notes on Implementation}\label{sec:notes}
\subsection{Initial values}
It is well known that the EM algorithm can be heavily dependent on the initial values; indeed, good initial values of parameter estimates may speed up convergence. In this study, the following procedure for automatically generating initial values is used, unless otherwise specified.
\begin{itemize}
\item Fill in the missing values based on the mean imputation method.
\item Perform $k$-means clustering and use the resulting clustering membership to initialize the \textit{a~posteriori} probability $\hat{z}_{ig}^{(0)}$. Accordingly, the initial values for the model parameters are then given by: 
\begin{align*}
\hat{\pi}_g^{(0)}&=\frac{\isum \hat{z}_{ig}^{(0)}}{n},&\hat{\vecmu}_g^{(0)}&=\frac{\isum\hat{z}_{ig}^{(0)}\vecy_i}{\isum \hat{z}_{ig}^{(0)}},&\hat{\matsig}_g^{(0)}&=\frac{\isum\hat{z}_{ig}^{(0)}(\vecy_i-\hat{\vecmu}_g^{(0)})(\vecy_i-\hat{\vecmu}_g^{(0)})^{\intercal}}{\isum \hat{z}_{ig}^{(0)}}.
\end{align*}
\item Set the skewness parameter $\vecbeta_g^{(0)}$ to be close to zero for symmetric data.
\item When applicable, we set $\omega_g^{(0)}=1$ and $\lambda_g^{(0)}=-1/2$ for the index and concentration parameters, which represents a special case of GHD (i.e., normal-inverse Gaussian) distribution, or set $v_g^{(0)}=50$ for the near-normality assumption.
\end{itemize}

To enhance the computational efficiency of the EM algorithm, we update the parameters per missing pattern instead of per individual. We suggest rearranging $\vecY$ according to unique patterns of the missing  data. The procedure can be implemented as follows:
\begin{itemize}
\item Build a binary $n$ by $p$ indicator matrix $\mathbf{R}=[r_{ij}]$, with each entry $r_{ij}=1$ if $\vecY_{ij}$ is missing and $r_{ij}=0$ otherwise;
\item Find all unique missing patterns; and
\item Update parameters per missing pattern instead of per individual.
\end{itemize}
 
\subsection{Model Selection and Stopping Criterion}
In general, the number of mixture components $G$ is not known \textit{a priori}, and needs to be estimated from the data. Two widely used model selection techniques are the Bayesian information criterion \citep[BIC;][]{schwarz78} and the integrated completed likelihood \citep[ICL;][]{biernacki00}, which are given respectively by
\begin{equation}
\nonumber
\text{BIC} = 2l(\vecy,\hat{\matTheta}) - \rho \log(n) \quad \text{and} \quad  \text{ICL} \approx \text{BIC} + 2\isum \gsum \text{MAP}\left\{\hat{z}_{ig}\right\}\log(\hat{z}_{ig}),
\end{equation}
where $l(\hat{\matTheta})$ is the maximized log-likelihood evaluated at the maximum likelihood estimate $\hat{\matTheta}$, $\rho$ is the number of free parameters, $n$ is the number of observations, $\hat{z}_{ig}$ represents the estimated \textit{a~posteriori} probability that $\vecy_i$ arises from the $g$th component, and \text{MAP} denotes the maximum \emph{a~posteriori} probability such that $\text{MAP}\left\{\hat{z}_{ig}\right\}=1$ if $\text{max}_g\left\{\hat{z}_{ig}\right\}$ occurs in the $g$th component and $\text{MAP}\left\{\hat{z}_{ig}\right\}=0$ otherwise. 
The bigger the BIC or ICL value, the better the fitted model.

The EM algorithm can be stopped iterations after the maximum number of iterations, or when the Aitken stopping criterion~\citep{aitken26}  is satisfied. The Aitken acceleration at iteration $k$ is
\begin{equation}
\nonumber
a^{(k)} = \frac{l^{(k+1)}-l^{(k)}}{l^{(k)}-l^{(k-1)}},
\end{equation}
where $l^{(k)}$ is the log-likelihood at iterations $k$. This yields an asymptotic estimate of the log-likelihood at iteration $k+1$: $$l_{\infty}^{(k+1)} = l^{(k)} + \frac{1}{1-a^{(k)}}(l^{(k+1)}-l^{(k)})$$ \citep{bohning94,lindsay95}, and the EM algorithm is stopped when $l_{\infty}^{(k+1)}-l^{(k)} < \epsilon$, provided this difference is positive \citep{mcnicholas10a}.

\section{Numerical Examples}\label{sec:example}
Studies based on both simulated and real datasets are used to compare the clustering performance of the proposed approach. 
Our proposed family of models for incomplete data is compared to multivariate \textit{t} mixture with ML estimation in the presence of missing values (M\textit{t}). BIC is used to select the model; models with higher values of BIC are preferable. The adjusted Rand index \citep[ARI;][]{hubert85} is used to compare predicted classifications to true classes when applicable. The Rand index \citep{rand71} is the ratio of pairwise agreements to total pairs, and the ARI corrects the Rand index to account for chance agreement. The ARI has expected value 0 under random classification and takes the value 1 for perfect class agreement. A detailed discussion of the ARI, and arguments in favour of its use, are given by \cite{steinley04}.
 
\subsection{Simulation Studies}
The simulated datasets are each two-component mixtures: a mixture of Gaussian distributions (GMM) with a general VEE covariance structure, a mixture of skew-t distributions (MST) with a diagonal VEI covariance structure, and a mixture of generalized hyperbolic distributions (MGHD) with a general VEE covariance structure.~The GMM datasets are generated via the \textsf{R} function {\tt{rmvnorm}} from the {\tt mvtnorm} package for \textsf{R}, and the MST and MGHD datasets are generated using {\sf R} code based on the stochastic representations in \eqref{eqn:skewtstochastic} and \eqref{eqn:ghd2}, respectively. 

For each mixture component, $n_g=200$ two-dimensional vectors $\vecy_i$ are generated. The presumed parameters of $\matsig_g$ ($g=1,2$) for the VEE and VEI models are the same as those considered in \citet{celeux95} and \citet{linlearning14}. Each mixture component is centred on a different point giving well-separated and overlapping mixtures. Where applicable, the skewness parameters are $\vecbeta_1=(1,1)^{\intercal}$ and $\vecbeta_2=(-1,-1)^{\intercal}$, the degrees of freedoms for the MST is $v_1=7$ and $v_2=5$, and the values of other parameters for the MGHD are $\omega_1=\omega_2=6$ and $\lambda_1=-{1}/{2}$ and $\lambda_2=1$. 

The datasets considered in the simulation studies are summarized in Table~\ref{table:Sim} and examples are plotted in Figure~\ref{figure:Sim}. The datasets are overlapping, making this a relatively difficult clustering scenario even when the datasets are complete.
\begin{table}[ht]
\centering
\caption{Summary of simulated datasets.}
\label{table:Sim}
\begin{tabular}{lrrr}
  \hline
 Dataset & Distribution & Covariance structure $(\matsig_g)$ & Separation between components \\ 
  \hline
   Sim1 & MGHD & VEE & Well-separated \\ 
   Sim2 & MGHD & VEE & Overlapping \\ 
   Sim3 & MST & VEI & Well-separated \\ 
   Sim4 & MST & VEI & Overlapping \\ 
   Sim5 & GMM & VEE & Well-separated \\ 
   Sim6 & GMM & VEE & Overlapping \\ 
   \hline
\end{tabular}
\end{table}
\begin{figure}[!ht]
	\centering
	\includegraphics[width=0.75\textwidth]{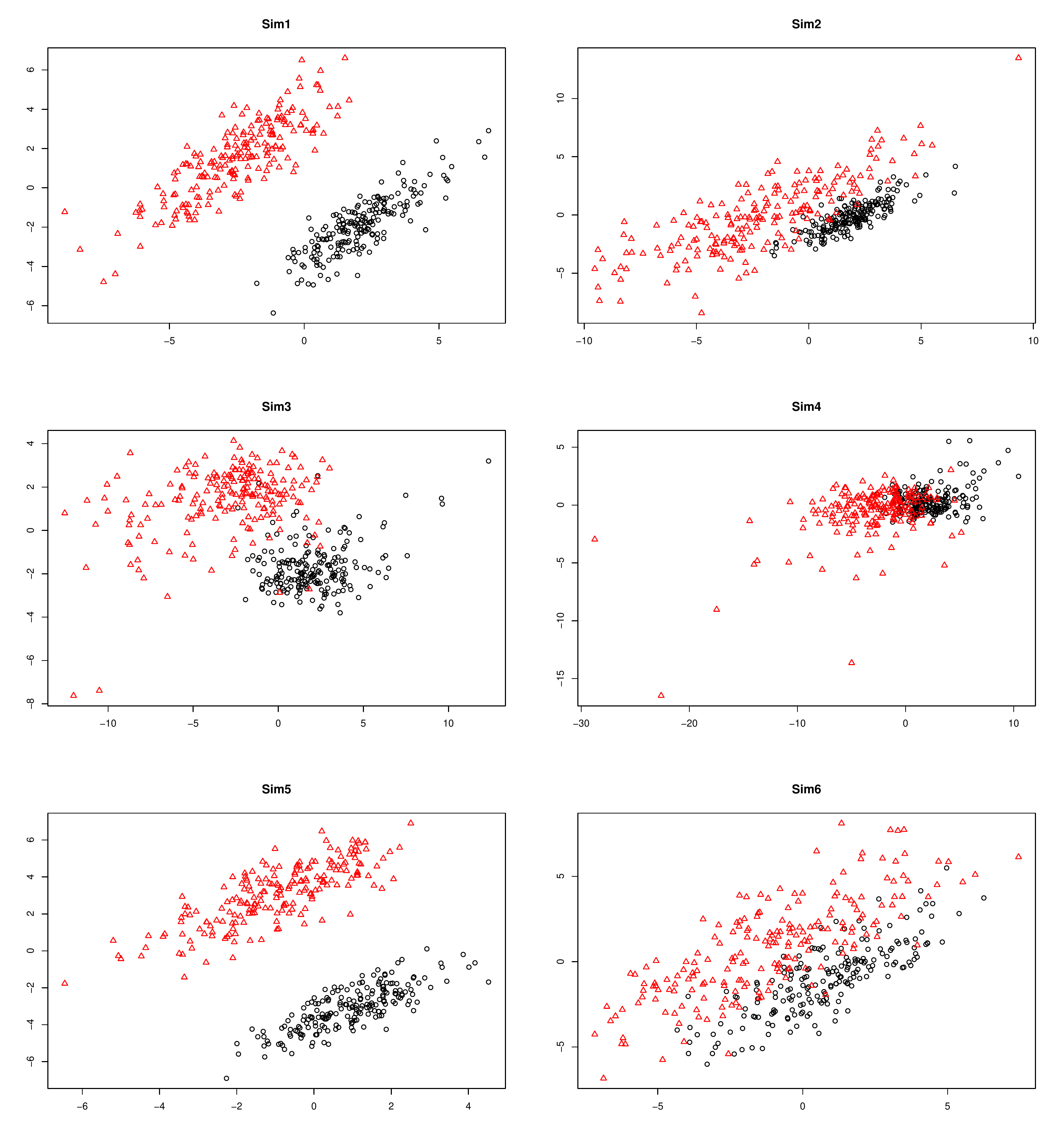}
	\vspace{-0.1in}
	\caption{Exemplar scatter plots for simulated datasets, where colour and plotting symbol represent true labels (component membership).}
	\label{figure:Sim}
\end{figure}

Artificial missing datasets are simulated by removing $n\times r$ elements from each column of the simulated samples through two different MAR patterns and the MCAR mechanism under three missing rates --- $r=0.05$ (low), $r=0.15$ (moderate), and $r=0.3$ (high) --- while maintaining the condition that each observation has at least one observed attribute. For the MAR mechanism, data points in the first column are sorted in descending order. Column $2$ is then divided into four equal blocks and, for each block, a specified number of elements (see Table~\ref{tab:pat}) are removed at random. When $p=1$, the second column is used.
\begin{table*}[h]
	\caption{Number of missing observations for each pattern.}
	\centering
	\begin{tabular*}{1.0\textwidth}{@{\extracolsep{\fill}}llll}
	\hline
	$r$&Pattern 1&Pattern 2\\
	\hline
$5\%$&(10,3,6,1)&(1,6,3,10)\\
$15\%$&(30,9,18,3)&(3,18,9,30)\\

$30\%$&(60,18,36,6)&(6,36,18,60)\\
\hline
	\end{tabular*}
	\label{tab:pat}
\end{table*}

First, we examine the ability of our proposed model to recover underlying parameters when the number of components and the covariance structure are correctly specified. These experiments comprise 100 replications per combination of missing pattern and missingness rate. The means of the parameter estimates with their associated standard deviations and bias are summarized in Table~\ref{sim1parresult} and ~\ref{sim3parresult} (Appendix~\ref{app:tables}). The means of  most parameter estimates are close to the true values with small standard deviations when $r=0.05$. The standard deviations increase as the missing rate increases, while at the same time, the average ARI slightly decreases. The means of estimated $\lambda_1$ and $\lambda_2$ in Sim1 are quite far from the true value because we obtain those estimates using an approximation to the Bessel function. In addition, there is no significant difference among the three missing patterns. Therefore, we use MCAR in the rest of the data examples. 

As another illustration, we explore the flexibility of the MGHD model for incomplete data and study the performance of the BIC for model selection. As mentioned in the introduction, the GHD is a flexible distribution with skewness, concentration, and index parameters. 
We compute the average ARI for the parsimonious MGHD and MST models introduced here as well as M\textit{t} under the circumstances of unknown clusters ($G=1,\ldots,4$). The detailed results are summarized in Table~\ref{table:SimNG} (Appendix~\ref{app:tables}). From Table~\ref{table:SimNG}, we observe the following:
\begin{itemize}
\item The average ARI decreases as the missing rate rises. As expected, overlapping components typically have lower ARI than the well-separated components. In addition, the average ARI considerably decreases when the missing rate reaches 30\% $(r=0.30)$ for Sim2, Sim4 and Sim6.

\item Our proposed parsimonious MGHD models for incomplete data perform significantly better than M\textit{t}. The family of MGHD models generally yields much higher ARI than its competitor parsimonious MST for incomplete data when the datasets are generated from a generalized hyperbolic distribution.

\item The BIC always finds the true number of clusters when using the MGHD for incomplete data, but tends to overestimate the number of clusters when using the MST  or M\textit{t} for incomplete data for datasets with overlapping mixtures. 

\item The BIC prefers MGHD over M\textit{t} in Sim5 and Sim6 where the data is generated from GMMs. We find that the samples are not necessarily symmetric, particularly with missing values. Figure~\ref{figure:Sim5T} and \ref{figure:Sim6T} show exemplar scatter plots for data from Sim5 and Sim6 for $r=0.10$.  The M\textit{t} tends to overestimate the number of clusters, hence, has a lower averaged BIC.

\end{itemize}

\begin{figure}[!ht]
	\centering
	\includegraphics[width=0.60\textwidth]{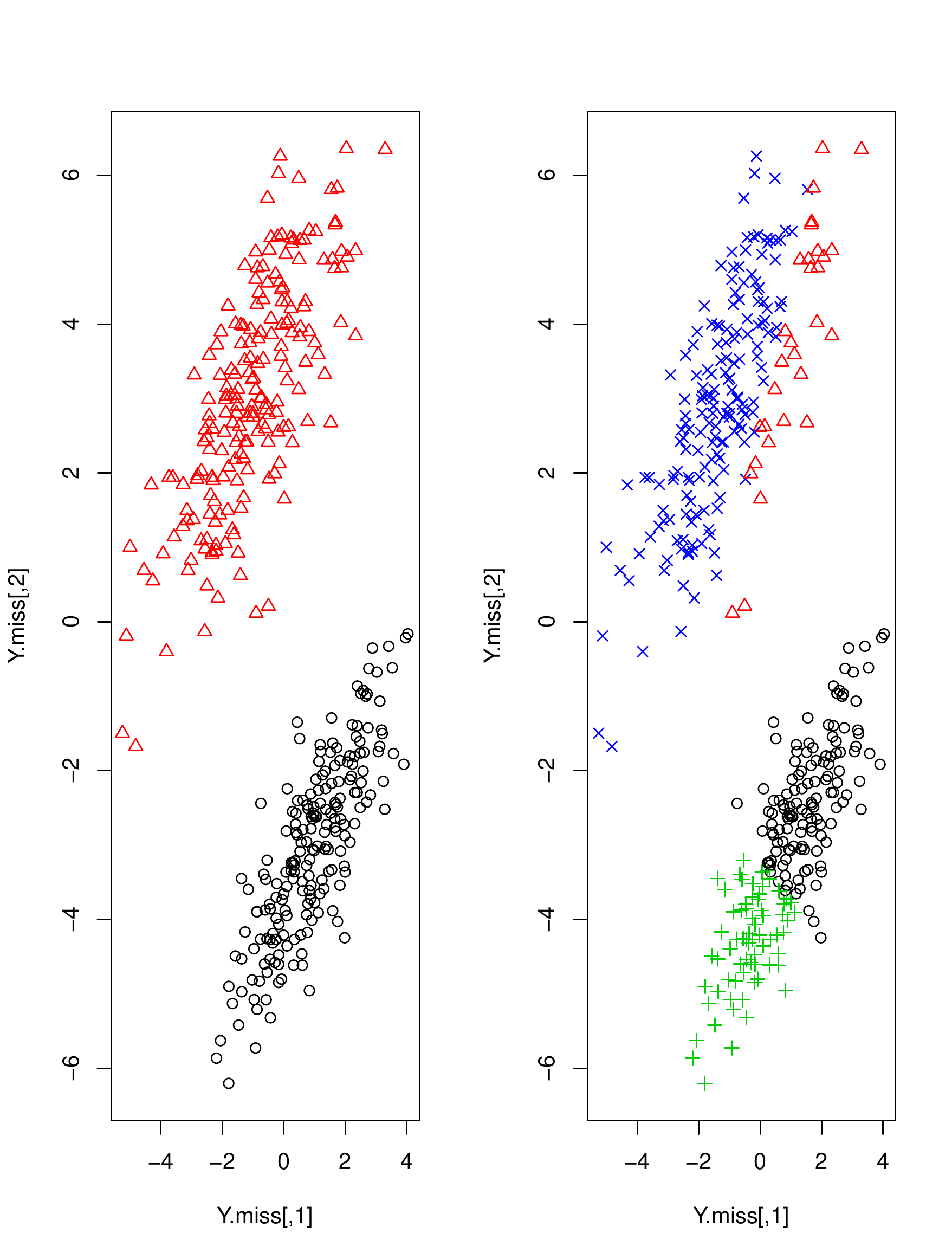}
	\vspace{-0.1in}
	\caption{Exemplar scatter plots for Sim5, with true labels (left) and clustering results from the best M\textit{t} models (right), where colour and plotting symbol represent true (left) or predicted (right) class.}
	\label{figure:Sim5T}
\end{figure}

\begin{figure}[!ht]
	\centering
	\includegraphics[width=0.60\textwidth]{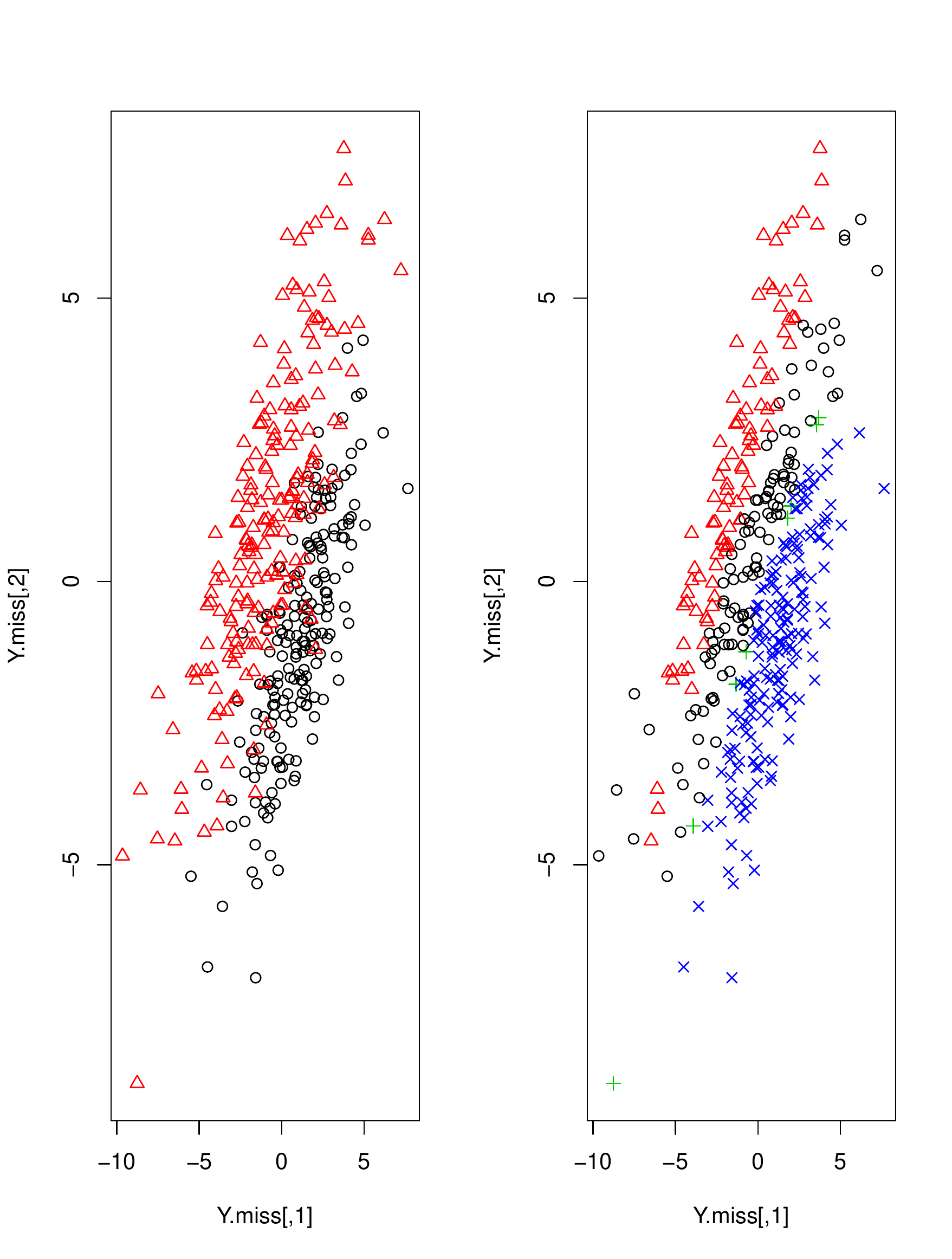}
	\vspace{-0.1in}
	\caption{Exemplar scatter plots for Sim6, with true labels (left) and clustering results from the best M\textit{t} models (right), where colour and plotting symbol represent true (left) or predicted (right) class.}
	\label{figure:Sim6T}
\end{figure}

\subsection{Breast Cancer Diagnostic Dataset}
The breast cancer diagnostic data consists of ten real-valued features on 569 cases of breast tumours -- 357 benign and 212 malignant. The mean, standard error, and ``worst'' or largest of these features were computed for each image, resulting in 30 attributes. This dataset is complete, so for illustration purposes we consider levels of missing data $r=0.05$ and $r=0.15$ by deleting observations through an MCAR mechanism while maintaining the condition that each observation has at least one observed attribute. The dataset is scaled prior to analysis.

The family of MGHD, MST and M\textit{t} models were fitted to these data for $G=1,\ldots,4$. We randomly assign each observation to one of the G groups and start with 20 random initializations of the algorithm, selecting the model with the maximum likelihood values. The key statistics of the best models for MGHD, MST and M\textit{t} are shown in Table~\ref{table:bcdcomp.}. The results of this analysis show that the parsimonious MGHD outperforms the other models for all levels of missing data.

\begin{table}[!htb]
\centering
\vspace{-0.1cm}
\caption{A comparison of averaged BIC, ARI and the number of times (nt) when $G=2$ is chosen among MGHD, MST, and M\textit{t} models on the tumour dataset with $G=1,\ldots,4$.}
\vspace{-0.12in}
\scalebox{1}{
\begin{tabular*}{1\textwidth}{@{\extracolsep{\fill}}lcccccc}
\hline
&\multicolumn{3}{c}{$r=0.05$}&\multicolumn{3}{c}{$r=0.15$}\\
\cline{2-4}\cline{5-7}
&Avg.BIC&Avg.ARI&nt&Avg.BIC&Avg.ARI&nt\\ [0.5ex] 
\hline
MGHD&$12145$&$0.65$&$18$&$9654$&$0.58$&$16$\\
MST&$12661$&$0.55$&$15$&$10574$&$0.56$&$16$\\
M\textit{t}&$13605$&$0.47$&$10$&$11605$&$0.36$&$10$\\
\hline
\end{tabular*}}
\label{table:bcdcomp.}
\end{table}

\subsection{Pima Indians Diabetes Data}
Data on the diabetes status of 768 patients is obtained from the UCI Machine Learning data repository. The data include information on eight attributes, in which the attribute of number of times pregnant is treated as continuous variable because its range is from 0~to~14. These data are a popular benchmark dataset for clustering for truly missing values, as 376 of the observations have at least one attribute missing. The data are overlapping and the numerous missing observations make clustering difficult. The detailed description of the attributes and their associated missing rates are summarized in Table~\ref{table:pima}. The dataset features 268 patients with a diabetes diagnosis and 500 without, and these are treated as two clusters.  Again, this dataset is scaled prior to the analysis.
 \begin{table}[ht]
\centering
\caption{A description of Pima Indian diabetes dataset.} 
\label{table:pima}
\begin{tabular}{lrrr}
  \hline
   & No.\ missing values & Sample mean & Sample std.\ dev.\\ 
  \hline
   Number of times pregnant &   0 &   3.85 &   3.37 \\ 
   Plasma glucose concentration  &   5 & 120.89 &  31.97 \\ 
   Diastolic blood pressure (mm Hg) &  35 &  69.11 &  19.36 \\ 
   Triceps skin fold thickness (mm) & 227 &  20.54 &  15.95 \\ 
   2-hour serum insulin(mu U/mL) & 374 &  79.80 & 115.24 \\ 
   Body mass index &  11 &  31.99 &   7.88 \\ 
   Diabetes pedigree function &   0 &   0.47 &   0.33 \\ 
   Age (years) &   0 &  33.24 &  11.76 \\ 
   \hline
\end{tabular}
\end{table}

Because there are two known clusters,  we fix $G=2$ and compare the BIC and ICL values for 14 covariance structures of our proposed parsimonious MGHD and MST models. The clustering results are summarized in Table~\ref{table:pimadata}. \citet{linlearning14} perform the M\textit{t} and matches the true cluster labels with 66.7\% accuracy.  Compared to~\citet{linlearning14}, our proposed parsimonious MGHD model for incomplete data gives a higher accuracy rate (69.11\%).
\begin{table}[ht]
\centering
\caption{The BIC, ICL, selected $\matsig_g$ and the correct classification rate for our proposed approaches for clustering on the Pima Indian diabetes dataset.}
\label{table:pimadata}
\begin{tabular*}{0.75\textwidth}{@{\extracolsep{\fill}}lrrrr}
  \hline
  &$\matsig_g$&BIC&ICL&Accuracy\\
  \hline
   MGHD&EVE&$-14016.95$&$-14053.61$& 69.11\% \\ 
   MST&VVI& $-14109.1$&$-14186.1$& 62.37\% \\
   \hline
\end{tabular*}
\end{table}
The best model  is the two-component MGHD model and $\matsig_g$=EVE. Group 1 consists mainly of the non-diabetic patients and Group 2 consists mainly of the diabetic patients. We then fit the best model with 100 random initializations; Table~\ref{table:pimaparest} shows the key parameter estimates for this model as well as the corresponding standard errors. The standard errors of the model parameters have been calculated using the bootstrap method described in \citet{efron86}. The estimates for $\vecmu_g+\vecbeta_g$ are quite similar to the parameter estimates presented in \citet{wang15}. The estimates for the skewness parameters indicate the presence of skewness in most of the variables. 
\begin{table}[htbp]
	\centering
	\caption{Summary of key model parameter estimates (standard errors) for the best chosen model (i.e., MGHD with $\matsig_g=$ EVE) for the Pima Indian diabetes dataset.}
		\begin{tabular*}{0.65\textwidth}{@{\extracolsep{\fill}}lrr}
		  \hline	
		Parameter&$g=1$& $g=2$ \\
		\hline
		$\mu_{1g}$&$-0.80$ (0.11) &2.98 (1.78)\\
		$\mu_{2g}$&$-0.97$ (0.22) &1.35 (4.01)\\
		$\mu_{3g}$ & $-0.69$ (0.14) & 1.10 (2.65)\\
		$\mu_{4g}$& 0.15 (0.08) & $-0.50$ (4.59)\\
		$\mu_{5g}$&$-1.26$ (1.73) &0.18 (0.25)\\
		$\mu_{6g}$&$-0.66$ (0.07)&0.57 (0.78)\\
		$\mu_{7g}$&$-0.74$ (0.12)&$-2.67$ (8.41)\\
		$\mu_{8g}$ &$-1.20$ (0.31)& $-2.01$ (2.04)\\
		$\beta_{1g}$ &0.57 (0.05) & $-2.18$ (1.79)\\
		$\beta_{2g}$ &0.77 (0.47) & $-0.92$ (0.25)\\
		$\beta_{3g}$&0.54 (0.40)& $-0.78$ (1.18)\\
		$\beta_{4g}$ &0.53 (0.31) & 0.58 (0.38)\\
		$\beta_{5g}$&0.11 (0.13) & 0.11 (0.32)\\
		$\beta_{6g}$ &0.57 (0.16) &$-0.37$ (0.51)\\
		$\beta_{7g}$ &0.63 (0.18) &1.27 (0.46)\\
		$\beta_{8g}$ &0.87 (0.16) & 2.91 (1.85)\\		
		$\omega_g$ & 2.39 (1.81) &14.18 (6.83)\\
		$\lambda_g$ &0.02 (0.34)&$-3.18$ (4.60)\\
		$\pi_g$&0.71 (0.09)& 0.29 (0.10) \\
		\hline
	\end{tabular*}
	\label{table:pimaparest}
\end{table}

\section{Discussion}\label{sec:discussion}
Approaches for clustering incomplete data where clusters may be heavy tailed and/or asymmetric is introduced, based on MGHD and MST. There approaches were further extended to parsimonious families of MGHD and MST models via eigen-decomposition of the component scale matrices. The BIC and ICL were used for model selection. It is well known that the BIC can tend to overestimate the number of clusters in practice; however, the results presented herein show that this overestimation can sometimes be mitigated via a more flexible component density such as the MGHD. An EM algorithm was developed to fit the MGHD and MST models to incomplete data, and later implemented in {\sf R}. It is worth mentioning that our approaches are also applicable in situations with no missing data; and so we have MGHD and MST analogues of the models of \cite{celeux95}. Our MGHD and MST models were applied to real and simulated heterogeneous datasets for clustering in the presence of missing values, and the PMGHD family performed favourably when compared to the PMST family as well as the MGHD and MST approaches with mean imputation.

In the present work, the missing data mechanism is assumed to be MAR. Future work will focus on a departure from this assumption. As a starting point, the behaviour of parameter estimates for models considered herein when we depart from the MAR assumption will be studied.
Although we demonstrated the PMGHD and PMST approaches for clustering, they also can be applied for semi-supervised classification, discriminant analysis, and density estimation; furthermore, they could be used within the fractionally-supervised paradigm \citep{vrbik15}. Furthermore, Bayesian analysis via a Gibbs sampler is another popular approach to handle missing data in multivariate datasets \citep[e.g.,][]{lin09a}, so a fully Bayesian treatment will be considered as an alternative to the EM algorithm for parameter estimation. Finally, it will also be interesting to generalize all existing approaches to developing mixture of generalized hyperbolic factor analyzer models \citep{tortora16}, mixtures with hypercube contours \citep{franczak15}, and mixtures of multiple scaled generalized hyperbolic distributions for incomplete data \citep{tortora14}.

{\small\paragraph{Acknowledgements} This work was supported by an Ontario Graduate Scholarship (Wei), an Early Researcher Award from the Government of Ontario (McNicholas), and the Canada Research Chairs program (McNicholas).}

{\small

}

\appendix
{\small\section{GPCM Family}\label{sec:gpcm}
\cite{banfield93} consider an eigen-decomposition of the component scale matrices (which is equivalent to the component covariance matrices for Gaussian mixtures), i.e.,
\begin{equation}\label{eqn:gpcm}
\matsig_g=\lambda_g\gam_g\del_g\gam_g',
\end{equation}
where $\lambda_g=\left|\matsig_g\right|^{1/p}$, $\gam_g$ is the matrix of eigenvectors of $\matsig_g$, and $\del_g$ is a diagonal matrix, such that $\left|\del_g\right|=1$, containing the normalized eigenvalues of $\matsig_g$ in decreasing order. Note that the columns of $\gam_g$ are ordered to correspond to the elements of $\del_g$. As \cite{banfield93} point out, the constituent elements of the decomposition in \eqref{eqn:gpcm} can be viewed in the context of the geometry of the component, where $\lambda_g$ represents the volume in $p$-space, $\del_g$ the shape, and $\gam_g$ the orientation.
By imposing constraints on the elements of the decomposed covariance structure in \eqref{eqn:gpcm}, \cite{celeux95} introduce a family of GPCMs (Table~\ref{tab:GPCM}). 
\begin{table}[ht]
\caption{The nomenclature and scale matrix structure for each member of the GPCM family.}\label{tab:GPCM}
\begin{tabular*}{\textwidth}{@{\extracolsep{\fill}}llllr}
\hline
{Nomenclature} & {Volume} & {Shape} & {Orientation} & $\matsig_g$\\
\hline
EII & Equal    & Spherical &             & $\lambda \ident$\\
VII & Variable & Spherical &             & $\lambda_g \ident$\\
EEI & Equal    & Equal     & Axis-Aligned & $\lambda \del$\\
VEI & Variable & Equal     & Axis-Aligned & $\lambda_g \del$\\
EVI & Equal    & Variable  & Axis-Aligned & $\lambda \del_g$\\
VVI & Variable & Variable  & Axis-Aligned & $\lambda_g \del_g$\\
EEE & Equal    & Equal     & Equal        & $\lambda\gam\del\gam'$\\
VEE & Variable & Equal     & Equal        & $\lambda_g\gam\del\gam'$\\
EVE & Equal    & Variable  & Equal        & $\lambda\gam\del_g\gam'$\\
EEV & Equal    & Equal     & Variable     & $\lambda\gam_g\del\gam_g'$\\
VVE & Variable & Variable  & Equal        & $\lambda_g\gam\del_g\gam'$\\
VEV & Variable & Equal     & Variable   & $\lambda_g\gam_g\del\gam_g'$\\
EVV & Equal    & Variable  & Variable   & $\lambda\gam_g\del_g\gam_g'$\\
VVV & Variable & Variable  & Variable   & $\lambda_g\gam_g\del_g\gam_g'$\\
\hline
\end{tabular*}
\end{table}

\section{Some Useful Matrix Computations}\label{app:matrixformula}
We here present some useful matrix computation results that are employed in the derivation of the conditional pdf of a partitioned generalized hyperbolic and multivariate skew-t random vector $\vecY$ in Propositions 3 and 6.

Consider a partitioned random vector $\vecY$ of $p$-dimension that follows the pdf as in \eqref{eqn:ghd3} with
\begin{align}
\label{eqn:parameterblock}
\vecY = \begin{pmatrix} \vecY_1 \\ \vecY_2 \end{pmatrix}&&\vecmu = \begin{pmatrix} \vecmu_1 \\ \vecmu_2 \end{pmatrix}&&\vecbeta = \begin{pmatrix} \vecbeta_1 \\ \vecbeta_2 \end{pmatrix}&&\matsig = \begin{pmatrix} \matsig_{11} & \matsig_{12}\\ \matsig_{21} &\matsig_{22} \end{pmatrix},
\end{align}
where $\vecY_1$ and $\vecY_2$ have dimensions $d_1$ and $d_2=p-d_1$, respectively. The mean, skewness and dispersion matrix are composed of blocks of appropriate dimensions as partitions of $\vecY$. Sometimes, it is more convenient to work with the inverse of dispersion matrix $\matsig\inv$:
\begin{align}
\label{eqn:siginverse}
\matsig\inv =\begin{pmatrix} (\matsig_{11}-\matsig_{12}\matsig_{22}\inv\matsig_{12}^{\intercal})\inv & -\matsig_{11}\inv\matsig_{12}(\matsig_{22}-\matsig_{12}^{\intercal}\matsig_{11}\inv\matsig_{12})\inv\\ -(\matsig_{22}-\matsig_{12}^{\intercal}\matsig_{11}\inv\matsig_{12})\inv\matsig_{12}^{\intercal}\matsig_{11}\inv &(\matsig_{22}-\matsig_{12}^{\intercal}\matsig_{11}\inv\matsig_{12})\inv\end{pmatrix}.
\end{align}
Furthermore, we have for the determinant of $\matsig$:
\begin{equation}
\label{eqn:detsigma}
\text{det}(\matsig) = \text{det}(\matsig_{11})\text{det}(\matsig_{22}-\matsig_{12}^{\intercal}\matsig_{11}\inv\matsig_{12}).
\end{equation}

\section{Outline of Proof of Proposition 3}
Here, we derive the conditional density of $\vecY_2$ given that $\vecY_1=\vecy_1$ if $\vecY_1$ and $\vecY_2$ are jointly generalized hyperbolic distributed, i.e., $\vecY \sim \text{GHD}_p(\lambda,\omega,\vecmu,\matsig,\vecbeta)$ with the partition in Appendix A. Although basic probability theory indicates that the conditional pdf is a ratio of the joint and marginal pdfs, the expression takes a very complicated form. The results from Appendix A are heavily used in the course of the derivations. The conditional density is given by
\begin{align*}
f_{\vecY_2 \mid \vecY_1}(\vecy_2 \mid \vecy_1) &= \frac{f_{\vecY_1,\vecY_2}(\vecy_1,\vecy_2)}{f_{\vecY_1}(\vecy_1)}\\
&=\frac{\left [\frac{\omega +\delta(\vecy,\vecmu \mid \matsig )}{\omega+\vecbeta^{\intercal}\matsig\inv\vecbeta}\right]^{\frac{\lambda-p/2}{2}}\frac{K_{\lambda-p/2}\left(\sqrt{(\omega +\delta(\vecy,\vecmu \mid \matsig))(\omega+\vecbeta^{\intercal}\matsig\inv\vecbeta)}\right)}{(2\pi)^{p/2} |\matsig|^{1/2}K_\lambda(\omega)\text{exp}\{-(\vecy-\vecmu)^{\intercal}\matsig\inv\vecbeta\}}}{\left [\frac{\omega +\delta(\vecy_1,\vecmu_1 \mid \matsig_{11} )}{\omega+\vecbeta_1^{\intercal}\matsig_{11}\inv\vecbeta_1}\right]^{\frac{\lambda-d_{1}/2}{2}}\frac{K_{\lambda-d_{1}/2}\left(\sqrt{(\omega +\delta(\vecy_1,\vecmu_1 \mid \matsig_{11}))(\omega+\vecbeta_1^{\intercal}\matsig_{11}\inv\vecbeta_1)}\right)}{(2\pi)^{d_{1}/2} |\matsig_{11}|^{1/2}K_\lambda(\omega)\text{exp}\{-(\vecy_1-\vecmu_1)^{\intercal}\matsig_{11}\inv\vecbeta_1\}}},
\end{align*}
where we combine \eqref{eqn:ghd3} and Proposition 2. For the moment, we focus on the linear form and quadratic form in which $\vecy$ enters the pdf in \eqref{eqn:ghd3}. Inserting the partition of $\vecY, \vecmu,\vecbeta$, and $\matsig$ in \eqref{eqn:parameterblock} and the inverse of dispersion matrix $\matsig\inv$ \eqref{eqn:siginverse} into the quadratic form yields
\begin{align}
\nonumber
\delta(\vecy,&\vecmu \mid \matsig ) = (\vecy - \vecmu)^{\intercal}\matsig\inv(\vecy - \vecmu)=\begin{pmatrix} (\vecy_1 - \vecmu_1)^{\intercal}&(\vecy_2 - \vecmu_2)^{\intercal}\end{pmatrix}\matsig\inv\begin{pmatrix} \vecy_1 - \vecmu_1\\\vecy_2 - \vecmu_2\end{pmatrix}\\
\nonumber
&=(\vecy_1 - \vecmu_1)^{\intercal}(\matsig_{11}-\matsig_{12}\matsig_{22}\inv\matsig_{12}^{\intercal})\inv(\vecy_1 - \vecmu_1)\\
\nonumber
&\quad-(\vecy_2 - \vecmu_2)^{\intercal}(\matsig_{22}-\matsig_{12}^{\intercal}\matsig_{11}\inv\matsig_{12})\inv\matsig_{12}^{\intercal}\matsig_{11}\inv(\vecy_1 - \vecmu_1)\\
\nonumber
&\quad-(\vecy_1 - \vecmu_1)^{\intercal}\matsig_{11}\inv\matsig_{12}(\matsig_{22}-\matsig_{12}^{\intercal}\matsig_{11}\inv\matsig_{12})\inv(\vecy_2 - \vecmu_2)\\
\nonumber
&\quad+(\vecy_2 - \vecmu_2)^{\intercal}(\matsig_{22}-\matsig_{12}^{\intercal}\matsig_{11}\inv\matsig_{12})\inv(\vecy_2 - \vecmu_2)\\
\nonumber
&=(\vecy_1 - \vecmu_1)^{\intercal}\matsig_{11}\inv(\vecy_1 - \vecmu_1)\\
\nonumber
&\quad+(\vecy_1 - \vecmu_1)^{\intercal}\matsig_{11}\inv\matsig_{12}(\matsig_{22}-\matsig_{12}^{\intercal}\matsig_{11}\inv\matsig_{12})\inv\matsig_{12}^{\intercal}\matsig_{11}\inv(\vecy_1 - \vecmu_1)\\
\nonumber
&\quad-(\vecy_2 - \vecmu_2)^{\intercal}(\matsig_{22}-\matsig_{12}^{\intercal}\matsig_{11}\inv\matsig_{12})\inv\matsig_{12}^{\intercal}\matsig_{11}\inv(\vecy_1 - \vecmu_1)\\
\nonumber
&\quad-(\vecy_1 - \vecmu_1)^{\intercal}\matsig_{11}\inv\matsig_{12}(\matsig_{22}-\matsig_{12}^{\intercal}\matsig_{11}\inv\matsig_{12})\inv(\vecy_2 - \vecmu_2)\\
\nonumber
&\quad+(\vecy_2 - \vecmu_2)^{\intercal}(\matsig_{22}-\matsig_{12}^{\intercal}\matsig_{11}\inv\matsig_{12})\inv(\vecy_2 - \vecmu_2)\\
\nonumber
&=(\vecy_1 - \vecmu_1)^{\intercal}\matsig_{11}\inv(\vecy_1 - \vecmu_1)\\
\nonumber
&\quad+(\vecy_2 - \vecmu_2-\matsig_{12}^{\intercal}\matsig_{11}\inv(\vecy_1 - \vecmu_1))^{\intercal}(\matsig_{22}-\matsig_{12}^{\intercal}\matsig_{11}\inv\matsig_{12})\inv(\vecy_2 - \vecmu_2-\matsig_{12}^{\intercal}\matsig_{11}\inv(\vecy_1 - \vecmu_1))\\
\label{eqn:quadraticy}
&=\delta(\vecy_1, \vecmu_1\mid \matsig_{11})+\delta(\vecy_2, \vecmu_{2\mid1}\mid \matsig_{2\mid1}),
\end{align}
where $\vecmu_{2\mid1}=\vecmu_2+\matsig_{12}^{\intercal}\matsig_{11}\inv(\vecy_1 - \vecmu_1)$ and $\matsig_{2\mid1}=(\matsig_{22}-\matsig_{12}^{\intercal}\matsig_{11}\inv\matsig_{12})\inv$.

Similarly, inserting into the linear form, following the same algebra as above, yields
\begin{align}
\nonumber
(&\vecy-\vecmu)^{\intercal}\matsig\inv\vecbeta = \begin{pmatrix} (\vecy_1 - \vecmu_1)^{\intercal}&(\vecy_2 - \vecmu_2)^{\intercal}\end{pmatrix}\matsig\inv\begin{pmatrix} \vecbeta_1\\\vecbeta_2 \end{pmatrix}\\
\nonumber
&=(\vecy_1 - \vecmu_1)^{\intercal}\matsig_{11}\inv\vecbeta_1+(\vecy_2 - \vecmu_2-\matsig_{12}^{\intercal}\matsig_{11}\inv(\vecy_1 - \vecmu_1))^{\intercal}(\matsig_{22}-\matsig_{12}^{\intercal}\matsig_{11}\inv\matsig_{12})\inv(\vecbeta_2-\matsig_{12}^{\intercal}\matsig_{11}\inv\vecbeta_1)\\
\label{eqn:lineary}
&=(\vecy_1 - \vecmu_1)^{\intercal}\matsig_{11}\inv\vecbeta_1+(\vecy_2-\vecmu_{2\mid1})^{\intercal}\matsig_{2\mid1}\inv\vecbeta_{2\mid1},
\end{align}
where $\vecmu_{2\mid1}$ and $\matsig_{2\mid1}$ are as described above, and $\vecbeta_{2\mid1} = \vecbeta_2-\matsig_{12}^{\intercal}\matsig_{11}\inv\vecbeta_1$.

Furthermore, we investigate the term $\vecbeta^{\intercal}\matsig\inv\vecbeta$, we obtain
\begin{align}
\nonumber
\vecbeta^{\intercal}\matsig\inv\vecbeta&=\begin{pmatrix} \vecbeta_1^{\intercal}&\vecbeta_2^{\intercal}\end{pmatrix}\matsig\inv\begin{pmatrix} \vecbeta_1\\\vecbeta_2 \end{pmatrix}\\
\nonumber
&=\vecbeta_1^{\intercal}\matsig_{11}\inv\vecbeta_1+(\vecbeta_2-\matsig_{12}^{\intercal}\matsig_{11}\inv\vecbeta_1)^{\intercal}(\matsig_{22}-\matsig_{12}^{\intercal}\matsig_{11}\inv\matsig_{12})\inv(\vecbeta_2-\matsig_{12}^{\intercal}\matsig_{11}\inv\vecbeta_1)\\
\label{eqn:quadraticbeta}
&=\vecbeta_1^{\intercal}\matsig_{11}\inv\vecbeta_1+\vecbeta_{2\mid1}^{\intercal}\matsig_{2\mid1}\vecbeta_{2\mid1}.
\end{align}

Finally,  we substitute \eqref{eqn:detsigma}, \eqref{eqn:quadraticy}, \eqref{eqn:lineary}, and \eqref{eqn:quadraticbeta}, and $p=d_1+d_2$ into the conditional density, and after some simple linear algebra, we obtain
\begin{align*}
f_{\vecY_2 \mid \vecY_1}&(\vecy_2 \mid \vecy_1) =\frac{\left(\frac{\omega+\delta(\vecy_1, \vecmu_1\mid \matsig_{11})+\delta(\vecy_2, \vecmu_{2\mid1}\mid \matsig_{2\mid1})}{\omega+\vecbeta_1^{\intercal}\matsig_{11}\inv\vecbeta_1+\vecbeta_{2\mid1}^{\intercal}\matsig_{2\mid1}\vecbeta_{2\mid1}}\right)^\frac{\lambda-\frac{d_1}{2}-\frac{d_2}{2}}{2}\left [\frac{\omega+\vecbeta_1^{\intercal}\matsig_{11}\inv\vecbeta_1}{\omega +\delta(\vecy_1,\vecmu_1 \mid \matsig_{11} )}\right]^{\frac{\lambda-d_{1}/2}{2}}}{(2\pi)^{\frac{d_2}{2}}|\matsig_{22}-\matsig_{12}^{\intercal}\matsig_{11}\inv\matsig_{12}|^{\frac{1}{2}}}\\
&\times\frac{K_{\lambda-\frac{d_1}{2}-\frac{d_2}{2}}\left(\sqrt{(\omega +\delta(\vecy_1,\vecmu_1 \mid \matsig_{11} )+\delta(\vecy_2, \vecmu_{2\mid1}\mid \matsig_{2\mid1}))(\omega+\vecbeta_1^{\intercal}\matsig_{11}\inv\vecbeta_1+\vecbeta_{2\mid1}^{\intercal}\matsig_{2\mid1}\vecbeta_{2\mid1})} \right)}{K_{\lambda-\frac{d_{1}}{2}}\left(\sqrt{(\omega +\delta(\vecy_1,\vecmu_1 \mid \matsig_{11}))(\omega+\vecbeta_1^{\intercal}\matsig_{11}\inv\vecbeta_1)}\right)\text{exp}(-(\vecy_2-\vecmu_{2\mid1})^{\intercal}\matsig_{2\mid1}\inv\vecbeta_{2\mid1})}.
\end{align*}
Set $\lambda_{2\mid1}=\lambda-\frac{d_1}{2}$, $\chi_{2\mid1}=\omega+\delta(\vecy_1, \vecmu_1\mid \matsig_{11})$, and $\psi_{2\mid1}=\omega+\vecbeta_1^{\intercal}\matsig_{11}\inv\vecbeta_1$, then we obtain
\begin{align*}
f_{\vecY_2 \mid \vecY_1}(\vecy_2 \mid \vecy_1) &=\left[\frac{\chi_{2\mid1}+\delta(\vecy_2, \vecmu_{2\mid1}\mid \matsig_{2\mid1})}{\psi_{2\mid1}+\vecbeta_{2\mid1}^{\intercal}\matsig_{2\mid1}\vecbeta_{2\mid1}}\right]^{\frac{\lambda_{2\mid1}-\frac{d_2}{2}}{2}}\\
&\times\frac{\left(\frac{\psi_{2\mid1}}{\chi_{2\mid1}}\right)^{\frac{\lambda_{2\mid1}}{2}}K_{\lambda_{2\mid1}-\frac{d_2}{2}}\left(\sqrt{(\psi_{2\mid1}+\vecbeta_{2\mid1}^{\intercal}\matsig_{2\mid1}\vecbeta_{2\mid1})(\chi_{2\mid1}+\delta(\vecy_2, \vecmu_{2\mid1}\mid \matsig_{2\mid1}))}\right)} {(2\pi)^{\frac{d_2}{2}}|\matsig_{2\mid1}|^{\frac{1}{2}}K_{\lambda_{2\mid1}}(\sqrt{\chi_{2\mid1}\psi_{2\mid1}})\text{exp}(-(\vecy_2-\vecmu_{2\mid1})^{\intercal}\matsig_{2\mid1}\inv\vecbeta_{2\mid1})}.
\end{align*}

Comparison with \eqref{eqn:ghd} reveals that this is a generalized hyperbolic distribution in the parameterization of \citet{mcneil05} with
\begin{align*}
\lambda_{2\mid1}&= \lambda-\frac{d_1}{2},&\chi_{2\mid1}&=\omega+(\vecy_1-\vecmu_1)^{\intercal}\matsig_{11}\inv(\vecy_1-\vecmu_1),\\
\psi_{2\mid1}&=\omega+\vecbeta_1^{\intercal}\matsig_{11}^{\intercal}\vecbeta,&\vecmu_{2\mid1}&=\vecmu_2+\matsig_{12}^{\intercal}\matsig_{11}\inv(\vecy_1-\vecmu_1),\\
\matsig_{2\mid1}&=\matsig_{22}-\matsig_{12}^{\intercal}\matsig_{11}\inv\matsig_{12},&\vecbeta_{2\mid1}&=\vecbeta_2-\matsig_{12}^{\intercal}\matsig_{11}\inv\vecbeta_1.
\end{align*}

\section{MST with Incomplete Data}\label{sec:MST}
Analogous to the MGHD model (\ref{eqn:mghd}), the MST model takes the density
\begin{equation}
\label{eqn:MST}
f_{\text{MST}}(\vecY_i\mid\matTheta) = \gsum \pi_g f_{\text{ST}} (\vecY_i\mid v_g,\vecmu_g,\matsig_g,\vecbeta_g),
\end{equation}
where $\matTheta=(\mathbf{\pi},\textbf{v}_g, \vecmu_g, \matsig_g, \vecbeta_g)$ with $\textbf{v}_g=(v_1,\ldots,v_g)$ and $\pi_g,\vecmu_g,\matsig_g$, and $\vecbeta_g$ are as defined above. By introducing the group membership variables $\vecZ_i\sim\mathcal{M}(1;\pi_1,\ldots,\pi_G)$, convenient three-layer hierarchical representations are given by
\begin{align}
\nonumber
\vecY_i\mid w_{ig},z_{ig}=1 &\sim \mathcal{N}(\vecmu_g+w_{ig}\vecbeta_g,w_{ig}\matsig_g)\\
\label{eqn:MSTthreelevel}
W_{ig}\mid z_{ig}=1&\sim\text{IG}(v_g/2,v_g/2).\\
\nonumber
\vecZ_i&\sim\mathcal{M}(1;\pi_1,\ldots,\pi_G)
\end{align}
Assume that the matrix $\vecY=(\vecY^{\text{o}\intercal},\vecY^{\text{m}\intercal})^{\intercal}$ contains missing data. For each $\vecy_i=(\vecy_i^{\text{o}\intercal},\vecy_i^{\text{m}\intercal})^{\intercal}$, we write $\vecmu_g=(\vecmu_{g,i}^{\text{o}\intercal},\vecmu_{g,i}^{\text{m}\intercal})^{\intercal}$, $\vecbeta_g=(\vecbeta_{g,i}^{\text{o}\intercal},\vecbeta_{g,i}^{\text{m}\intercal})^{\intercal}$, and finally the $g$th dispersion matrix $\matsig_g$ is partitioned as in (\ref{eqn:matsigmiss}). Hence, based on (\ref{eqn:MSTthreelevel}), we have the following conditional distributions:
\begin{itemize}
\item The marginal distribution of $\vecY_i^{\text{o}}$ is 
	\begin{equation}
	\nonumber
	\vecY_i^{\text{o}} \sim \gsum \pi_g f_{\text{ST},p_i^{\text{o}}}(\lambda_g,\omega_g,\vecmu_{g,i}^{\text{o}}, \matsig_{g,i}^{\text{oo}}, \vecbeta_{g,i}^{\text{o}}),
	\end{equation}
	where $p_i^{\text{o}}$ is the dimension corresponding to the observed component $\vecy_i^{\text{o}}$, which should be exactly written as $p_i^{\text{o}_i}$ but here is simplified.
\item The conditional distribution of $\vecY_i^{\text{m}}$ given $\vecy_i^{\text{o}}$ and $z_{ig}=1$, according to Proposition~6, is 
	\begin{equation}
	\label{eqn:ymissconpdfskewt}
	\vecY_i^{\text{m}} \mid \vecy_i^{\text{o}}, z_{ig}=1\sim \text{GH}_{p-p_i^{\text{o}}}(\lambda_{g,i}^{\text{m}\mid \text{o}},\chi_{g,i}^{\text{m}\mid \text{o}},\psi_{g,i}^{\text{m}\mid \text{o}},\vecmu_{g,i}^{\text{m}\mid \text{o}},\matsig_{g,i}^{\text{m}\mid \text{o}},\vecbeta_{g,i}^{\text{m}\mid \text{o}}),  
	\end{equation}
	where
	\begin{align}
	\nonumber
	\lambda_{g,i}^{\text{m}\mid \text{o}}&= -\frac{v_g+p_i^{\text{o}}}{2},&\psi_{g,i}^{\text{m}\mid \text{o}}&=v_g+(\vecy_i^{\text{o}}-\vecmu_{g,i}^{\text{o}})^{\intercal}(\matsig_{g,i}^{\text{oo}})\inv(\vecy_i^{\text{o}}-\vecmu_{g,i}^{\text{o}}),\\
	\nonumber
	\psi_{g,i}^{\text{m}\mid \text{o}}&=\vecbeta_{g,i}^{\text{o}\intercal}(\matsig_{g,i}^{\text{oo}})\inv\vecbeta_{g,i}^{\text{o}},&\vecmu_{g,i}^{\text{m}\mid \text{o}}&=\vecmu_{g,i}^{\text{m}}+(\matsig_{g,i}^{\text{om}})^{\intercal}(\matsig_{g,i}^{\text{oo}})\inv(\vecy_i^{\text{o}}-\vecmu_{g,i}^{\text{o}}),\\
	\nonumber
	\matsig_{g,i}^{\text{m}\mid \text{o}}&=\matsig_{g,i}^{\text{mm}}-(\matsig_{g,i}^{\text{om}})^{\intercal}(\matsig_{g,i}^{\text{oo}})\inv\matsig_{g,i}^{\text{om}},&\vecbeta_{g,i}^{\text{m}\mid \text{o}}&=\vecbeta_{g,i}^{\text{m}}-(\matsig_{g,i}^{\text{om}})^{\intercal}(\matsig_{g,i}^{\text{oo}})\inv\vecbeta_{g,i}^{\text{o}}.
	\end{align}
\item The conditional distribution of $\vecY_i^{\text{m}}$ given $\vecy_i^{\text{o}}, w_{ig}$, and $z_{ig}=1$ is 
        \begin{equation}
        \vecY_i^{\text{m}} \mid \vecy_i^{\text{o}}, w_{ig}, z_{ig}=1\sim \mathcal{N}_{p-p_i^{\text{o}}}(\vecmu_{g,i}^{\text{m}\mid \text{o}}+w_{ig}\vecbeta_{g,i}^{\text{m}\mid \text{o}}, w_{ig}\matsig_{g,i}^{\text{m}\mid \text{o}}).
        \end{equation}
\item The conditional distribution of $W_i$ given $\vecy_i^{\text{o}}$ and $z_{ig}=1$ is
	\begin{equation}
	\label{eqn:wconpdfskewt}
	W_{ig} \mid \vecy_i^{\text{o}},z_{ig}=1 \sim \text{GIG}\left(\vecbeta_{g,i}^{\text{o}\intercal}(\matsig_{g,i}^{\text{oo}})\inv\vecbeta_{g,i}^{\text{o}}, v_g+\delta(\vecy_i^{\text{o}},\vecmu_{g,i}^{\text{o}}\mid \matsig_{g,i}^{\text{oo}}), -\frac{v_g+p_i^{\text{o}}}{2}\right).
	\end{equation} 
\end{itemize}

As in the case of the MGHD model with incomplete data, the complete data consists of the observed $\vecy_i$, the missing group membership $z_{ig}$, the latent $w_{ig}$, as well as the actual missing data $\vecy_i^{\text{m}}$, for $i=1,\ldots,n$ and $g=1,\ldots,G$. Again, the complete data log-likelihood function is given by
\begin{align}
\label{eqn:MSTloglikmiss}
l_{\text{c}}(\matTheta) = \isum\gsum z_{ig}\left[\log\pi_g+\log\phi(\vecy_i^{\text{o}},\vecy_i^{\text{m}}\mid\vecmu_g+w_{ig}\vecbeta_g,w_{ig}\matsig_g)+\log f_{\text{IG}}(w_{ig}\mid v_g/2,v_g/2)\right].
\end{align}
Furthermore, one can simplify (\ref{eqn:MSTloglikmiss}) to
\begin{equation}
\label{eqn:MSTloglik}
\begin{split}
l_{\text{c}}(\matTheta)&
=\isum\gsum z_{ig}\log\pi_g+\isum\gsum z_{ig}\left[ -\frac{p}{2}\log(2\pi)-\frac{p}{2}\log w_{ig}+\frac{1}{2}\log |\matsig_g\inv|\right]\\
&-\frac{1}{2}\isum\gsum\text{tr}\left( \matsig_g\inv z_{ig} \frac{1}{w_{ig}}\begin{pmatrix} (\vecy_i^{\text{o}}-\vecmu_{g,i}^{\text{o}})(\vecy_i^{\text{o}}-\vecmu_{g,i}^{\text{o}})^{\intercal} &(\vecy_i^{\text{o}}-\vecmu_{g,i}^{\text{o}})(\vecy_i^{\text{m}}-\vecmu_{g,i}^{\text{m}})^{\intercal}\\(\vecy_i^{\text{m}}-\vecmu_{g,i}^{\text{m}})^{\intercal}(\vecy_i^{\text{o}}-\vecmu_{g,i}^{\text{o}})&(\vecy_i^{\text{m}}-\vecmu_{g,i}^{\text{m}})(\vecy_i^{\text{m}}-\vecmu_{g,i}^{\text{m}})^{\intercal}\end{pmatrix} \right)\\
&+\frac{1}{2}\isum\gsum\text{tr}\left(  \matsig_g\inv z_{ig} \begin{pmatrix}\vecbeta_{g,i}^{\text{o}}\\\vecbeta_{g,i}^{\text{m}}\end{pmatrix}\begin{pmatrix} (\vecy_i^{\text{o}}-\vecmu_{g,i}^{\text{o}})^{\intercal}&(\vecy_i^{\text{m}}-\vecmu_{g,i}^{\text{m}})^{\intercal}\end{pmatrix}\right)\\
&+\frac{1}{2}\isum\gsum\text{tr}\left(  \matsig_g\inv z_{ig} \begin{pmatrix} \vecy_i^{\text{o}}-\vecmu_{g,i}^{\text{o}}\\\vecy_i^{\text{m}}-\vecmu_{g,i}^{\text{m}}\end{pmatrix}\begin{pmatrix}\vecbeta_{g,i}^{\text{o}\intercal}&\vecbeta_{g,i}^{\text{m}\intercal}\end{pmatrix}\right)-\frac{1}{2}\isum\gsum z_{ig}w_{ig}\vecbeta_{g,i}^{\intercal}\matsig_g\inv\vecbeta_{g,i}\\
&+\isum\gsum z_{ig}\left[\frac{v_g}{2}\log\left(\frac{v_g}{2}\right)-\log\Gamma\left(\frac{v_g}{2}\right)-\left(\frac{v_g}{2}+1\right)\log w_{ig}-\frac{v_g}{2w_{ig}}\right].
\end{split}
\end{equation}
 
On the $k$th iteration of the E-step, the expected value of the complete-data log-likelihood is computed given the observed data $\vecY^{\text{o}}$ and the current parameter updates $\matTheta^{(k)}$. Denote by $\tau_{ig}^{(k)}$ the \textit{a~posteriori} probability that the $i$th observation belongs to the $g$th component of the mixture. Specifically, it can be calculated as
\begin{equation}
\nonumber
\tau_{ig}^{(k+1)} \colonequals \E(Z_{ig}\mid \vecy_i^{\text{o}},\matTheta^{(k)})=\frac{\pi_g^{(k)}f_{\text{ST},p_i^{\text{o}}}(\vecy_i^{\text{o}};v_g^{(k)},\vecmu_{g,i}^{\text{o}(k)},\matsig_{g,i}^{\text{oo}(k)},\vecbeta_{g,i}^{\text{o}(k)})}{\sum_{l=1}^{G}\pi_l^{(k)}f_{\text{ST},p_i^{\text{o}}}(\vecy_i^{\text{o}};v_l^{(k)},\vecmu_{l,i}^{\text{o}(k)},\matsig_{l,i}^{\text{oo}(k)},\vecbeta_{l,i}^{\text{o}(k)})}.
\end{equation}

Given the observed data $\vecy^{\text{o}}$, the current parameter updates $\matTheta^{(k)}$, and conditional distributions (\ref{eqn:ymissconpdfskewt}) and (\ref{eqn:wconpdfskewt}), taking expectations for (\ref{eqn:MSTloglik}) leads to the following expectation updates in the E-step:
\begin{align*}
A_{ig}^{(k)}&\colonequals \E(W_{ig}\mid \vecy_i^{\text{o}},z_{ig}=1;\matTheta^{(k)}) = \sqrt{\frac{v_g^{(k)}+\delta(\vecy_i^{\text{o}},\vecmu_{g,i}^{\text{o}(k)}\mid\matsig_{g,i}^{\text{oo}(k)})}{\vecbeta_{g,i}^{o(k)\intercal}(\matsig_{g,i}^{\text{oo}(k)})\inv\vecbeta_{g,i}^{\text{o}(k)}}}\\
&\qquad\times\frac{K_{-(v_g^{(k)}+p_i^0)/2+1}\left(\sqrt{(v_g^{(k)}+\delta(\vecy_i^{\text{o}},\vecmu_{g,i}^{\text{o}(k)}\mid\matsig_{g,i}^{\text{oo}(k)}))(\vecbeta_{g,i}^{o(k)\intercal}(\matsig_{g,i}^{\text{oo}(k)})\inv\vecbeta_{g,i}^{\text{o}(k)})}\right)}{K_{-(v_g^{(k)}+p_i^0)/2}\left(\sqrt{(v_g^{(k)}+\delta(\vecy_i^{\text{o}},\vecmu_{g,i}^{\text{o}(k)}\mid\matsig_{g,i}^{\text{oo}(k)}))(\vecbeta_{g,i}^{o(k)\intercal}(\matsig_{g,i}^{\text{oo}(k)})\inv\vecbeta_{g,i}^{\text{o}(k)})}\right)},\\
B_{ig}^{(k)}&\colonequals\E({1}/{W_{ig}}\mid \vecy_i^{\text{o}},z_{ig}=1;\matTheta^{(k)}) \\
&= \frac{v_g^{(k)}+p_i^{\text{o}}}{v_g^{(k)}+\delta(\vecy_i^{\text{o}},\vecmu_{g,i}^{\text{o}(k)}\mid\matsig_{g,i}^{\text{oo}(k)})}+\sqrt{\frac{\vecbeta_{g,i}^{o(k)\intercal}(\matsig_{g,i}^{\text{oo}(k)})\inv\vecbeta_{g,i}^{\text{o}(k)}}{v_g^{(k)}+\delta(\vecy_i^{\text{o}},\vecmu_{g,i}^{\text{o}(k)}\mid\matsig_{g,i}^{\text{oo}(k)})}}\\
&\qquad\times\frac{K_{-(v_g^{(k)}+p_i^0)/2+1}\left(\sqrt{(v_g^{(k)}+\delta(\vecy_i^{\text{o}},\vecmu_{g,i}^{\text{o}(k)}\mid\matsig_{g,i}^{\text{oo}(k)}))(\vecbeta_{g,i}^{o(k)\intercal}(\matsig_{g,i}^{\text{oo}(k)})\inv\vecbeta_{g,i}^{\text{o}(k)})}\right)}{K_{-(v_g^{(k)}+p_i^0)/2}\left(\sqrt{(v_g^{(k)}+\delta(\vecy_i^{\text{o}},\vecmu_{g,i}^{\text{o}(k)}\mid\matsig_{g,i}^{\text{oo}(k)}))(\vecbeta_{g,i}^{o(k)\intercal}(\matsig_{g,i}^{\text{oo}(k)})\inv\vecbeta_{g,i}^{\text{o}(k)})}\right)},\\
C_{ig}^{(k)}&\colonequals\E(\log W_{ig}\mid \vecy_i^{\text{o}},z_{ig}=1;\matTheta^{(k)})=\log\left( \sqrt{\frac{v_g^{(k)}+\delta(\vecy_i^{\text{o}},\vecmu_{g,i}^{\text{o}(k)}\mid\matsig_{g,i}^{\text{oo}(k)})}{\vecbeta_{g,i}^{o(k)\intercal}(\matsig_{g,i}^{\text{oo}(k)})\inv\vecbeta_{g,i}^{\text{o}(k)}}} \right)\\
&+\left.\frac{\partial}{\partial t}\log\left\{K_{t}\left(\sqrt{(v_g^{(k)}+\delta(\vecy_i^{\text{o}},\vecmu_{g,i}^{\text{o}(k)}\mid\matsig_{g,i}^{\text{oo}(k)}))(\vecbeta_{g,i}^{o(k)\intercal}(\matsig_{g,i}^{\text{oo}(k)})\inv\vecbeta_{g,i}^{\text{o}(k)})}\right)\right\}\right\rvert_{t=-(v_g^{(k)}+p_i^{\text{o}})/2},\\
\hat{\vecy}_{ig}^{\text{m}(k)}&\colonequals\E(\vecY_i^{\text{m}}\mid\vecy_i^{\text{o}},z_{ig}=1)=\vecmu_{g,i}^{\text{m}\mid \text{o}(k)}+A_{ig}^{(k)}\vecbeta_{g,i}^{\text{m}\mid \text{o}(k)},\\
\tilde{\vecy}_{ig}^{\text{m}(k)}&\colonequals\E(({1}/{W_{i}})\vecY_i^{\text{m}}\mid\vecy_i^{\text{o}},z_{ig}=1)=B_{ig}^{(k)}\vecmu_{g,i}^{\text{m}\mid \text{o}(k)}+\vecbeta_{g,i}^{\text{m}\mid \text{o}(k)},\\
\tilde{\tilde{\vecy}}_{ig}^{\text{m}(k)}&\colonequals\E(({1}/{w_{i}})\vecY_i^{\text{m}}\vecY_i^{m\intercal}
\mid\vecy_i^{\text{o}},z_{ig}=1)=\matsig_{g,i}^{\text{m}\mid \text{o}(k)}+B_{ig}^{(k)}\vecmu_{g,i}^{\text{m}\mid \text{o}(k)}(\vecmu_{g,i}^{\text{m}\mid \text{o}(k)})^{\intercal}\\
&\qquad\qquad\qquad\qquad+\vecmu_{g,i}^{\text{m}\mid \text{o}(k)}(\vecbeta_{g,i}^{\text{m}\mid \text{o}(k)})^{\intercal}+\vecbeta_{g,i}^{\text{m}\mid \text{o}(k)}(\vecmu_{g,i}^{\text{m}\mid \text{o}(k)})^{\intercal}+A_{ig}^{(k)}\vecbeta_{g,i}^{\text{m}\mid \text{o}(k)}(\vecbeta_{g,i}^{\text{m}\mid \text{o}(k)})^{\intercal}.
\end{align*} 
 
For convenience, let $n_g^{(k)} = \isum \tau_{ig}^{(k)}$, $\bar{A}_g^{(k)} = 1/n_g^{(k)}\isum \tau_{ig}^{(k)}A_{ig}^{(k)}$, $\bar{B}_g^{(k)} = 1/n_g^{(k)}\isum \tau_{ig}^{(k)}B_{ig}^{(k)}$, and $\bar{C}_g^{(k)} = 1/n_g^{(k)}\isum \tau_{ig}^{(k)}C_{ig}^{(k)}$. On the $k$th iteration of the M-step, we get updates for the parameter estimates of the mixture as follows:
 \begin{align*}
\pi_g^{(k+1)} &= \frac{n_g^{(k)}}{n},\\
{\vecmu}_g^{(k+1)} &= \frac{1}{\isum\hat{\tau}_{ig}^{(k)}(\bar{A}_g^{(k)}B_{ig}^{(k)}-1)}\isum \hat{\tau}_{ig}^{(k)}\begin{pmatrix}(\bar{A}_g^{(k)}B_{ig}^{(k)}-1)\vecy_i^{\text{o}}\\\bar{A}_g^{(k)}\tilde{\vecy}_{ig}^{\text{m}(k)}-\hat{\vecy}_{ig}^{\text{m}(k)}\end{pmatrix},\\
{\vecbeta}_g^{(k+1)}&=\frac{1}{\isum\hat{\tau}_{ig}^{(k)}(\bar{A}_g^{(k)}B_{ig}^{(k)}-1)}\isum \hat{\tau}_{ig}^{(k)}\begin{pmatrix}(\bar{B}_g^{(k)}-B_{ig}^{(k)})\vecy_i^{\text{o}}\\\bar{B}_g^{(k)}\hat{\vecy}_{ig}^{\text{m}(k)}-\tilde{\vecy}_{ig}^{\text{m}(k)}\end{pmatrix},\\
{\matsig}_g^{(k+1)}&=\frac{1}{n_g^{(k)}}\isum\hat{\tau}_{ig}^{(k)}{\matsig}_{ig}^{(k+1)}-(\bar{\vecy}_g-{\vecmu}_g^{(k+1)}){\vecbeta}_g^{(k+1)\intercal}-{\vecbeta}_g^{(k+1)}(\bar{\vecy}_g-{\vecmu}_g^{(k+1)})^{\intercal}+\bar{A}_g^{(k+1)}{\vecbeta}_g^{(k+1)}{\vecbeta}_g^{(k+1)\intercal},
\end{align*}
where 
\begin{align*}
\bar{\vecy}_g&=\frac{1}{n_g^{(k+1)}}\isum\hat{\tau}_{ig}^{(k+1)}\begin{pmatrix}\vecy_i^{\text{o}}\\\hat{\vecy}_{ig}^{\text{m}(k+1)}\end{pmatrix},\\
{\matsig}_{ig}^{(k+1)} &= \begin{pmatrix}B_{ig}^{(k+1)}(\vecy_i^{\text{o}}-\vecmu_{g,i}^{\text{o}(k+1)})(\vecy_i^{\text{o}}-\vecmu_{g,i}^{\text{o}(k+1)})^{\intercal}&(\vecy_i^{\text{o}}-\hat{\vecmu}_g^{\text{o}(k+1)})(\tilde{\vecy}_{ig}^{\text{m}(k+1)}-B_{ig}^{(k+1)}\hat{\vecmu}_g^{\text{m}(k+1)})^{\intercal}\\(\tilde{\vecy}_{ig}^{\text{m}(k+1)}-B_{ig}^{(k+1)}\hat{\vecmu}_g^{\text{m}(k+1)})(\vecy_i^{\text{o}}-\vecmu_{g,i}^{\text{o}(k+1)})^{\intercal}&\mathbf{k}_{ig}^{\text{m}(k+1)}\end{pmatrix},
\end{align*}
where $$\mathbf{k}_{ig}^{\text{m}(k+1)}=\tilde{\tilde{\vecy}}_{ig}^{\text{m}(k+1)}-\tilde{\vecy}_{ig}^{\text{m}(k)}\hat{\vecmu}_g^{\text{m}(k+1)T}-\hat{\vecmu}_g^{\text{m}(k+1)}\tilde{\vecy}_i^{\text{m}(k)\intercal}+B_{ig}^{(k)}\hat{\vecmu}_g^{\text{m}(k+1)}\hat{\vecmu}_g^{\text{m}(k+1)\intercal}.$$

Finally, as for the degree of freedom parameter $v_g$, the update does not exist in closed form. The update $v_g^{(k+1)}$ is the solution of
\begin{equation}
\log\left(\frac{v_g^{(k+1)}}{2}\right) +1-\varphi\left(\frac{v_g^{(k+1)}}{2}\right)-\frac{1}{n_g^{(k)}}\isum\tau_{ig}(C_{ig}^{(k)}+B_{ig}^{(k)})=0,
\end{equation}
where $\varphi(\cdot)$ is the digamma function.

\section{Results from Simulation Studies}\label{app:tables}
The results from the simulation studies are summarized in Tables~\ref{sim1parresult}, ~\ref{sim3parresult} and~\ref{table:SimNG}.

\begin{sidewaystable}[ht]
\caption{Key model parameters as well as means, standard deviations and bias of the associated parameter estimations from the 100 runs for the first simulation experiment.}
\label{sim1parresult}
	\centering
	\tiny{
	\begin{tabular*}{1.1\textwidth}{@{\extracolsep{\fill}}cccccccccc}
		\hline
		\multicolumn{10}{c}{Sim1 using MGHD}\\
		\hline
		\multicolumn{10}{c}{$r=0.05$}\\
		&\multicolumn{3}{c}{Pattern 1}&\multicolumn{3}{c}{Pattern 2}&\multicolumn{3}{c}{MCAR}\\
\cline{2-4}\cline{5-7}\cline{8-10}
	&
	Mean&Std. dev&Bias&Mean&Std. dev.&Bias&Mean&Std. dev.&Bias\\
	\hline
	$\vecmu_1$
	&$(0.70,-3.30)'$
	&$(1.06,0.94)'$
	&$(-0.30,-0.30)'$
	&$(0.45,-3.40)'$
	&$(1.23,1.08)'$
	&$(-0.55,-0.40)'$
	&$(0.64,-3.22)'$
	&$(0.86,0.79)'$
	&$(-0.36,-0.22)'$
	\\
	$\vecmu_2$
	&$(-0.74,3.47)'$
	&$(2.11,2.49)'$
	&$(0.26,0.47)'$
	&$(-0.57,3.41)'$
	&$(2.62,2.17)'$
	&$(0.43,0.41)'$
	&$(-0.64,3.42)'$
	&$(2.54,2.20)'$
	&$(0.36,0.42)'$
	\\
	$\vecbeta_1$
	&$(1.59,1.59)'$
	&$(1.30,1.16)'$
	&$(0.59,0.59)'$
	&$(1.91,1.74)'$
	&$(1.49,1.35)'$
	&$(0.91,0.74)'$
	&$(1.67,1.50)'$
	&$(1.06,0.98)'$
	&$(0.67,0.50)'$
	\\
	
	$\vecbeta_2$
	&$(-1.73,-1.98)'$
	&$(2.50,2.92)'$
	&$(-0.73,-0.98)'$
	&$(-1.96,-1.91)'$
	&$(3.10,2.56)'$
	&$(-0.96,-0.91)'$
	&$(-1.86,-1.94)'$
	&$(2.32,2.09)'$
	&$(-0.86,-0.94)'$
	\\
	$\vecmu_1+\vecbeta_1$
	&$(2.29,-1.71)'$
	&$(0.26,0.25)'$
	&$(0.29,0.29)'$
	&$(2.36,-1.66)'$
	&$(0.32,0.32)'$
	&$(0.36,0.34)'$
	&$(2.31,-1.71)'$
	&$(0.25,0.26)'$
	&$(0.31,0.29)'$
	\\
	$\vecmu_2+\vecbeta_2$
	&$(-2.47,1.48)'$
	&$(0.44,0.47)'$
	&$(-0.47,-0.52)'$
	&$(-2.54,1.50)'$
	&$(0.51,0.43)'$
	&$(-0.54,-0.50)'$
	&$(-2.50,1.48)'$
	&$(0.50,0.54)'$
	&$(-0.50,-0.52)'$
	\\
	$\matsig_1$
		&$\left[\begin{array}{cc}1.88&1.51\\1.51&1.90\end{array}\right]$
		&$\left[\begin{array}{cc}0.32&0.27\\0.27&0.30\end{array}\right]$
		&$\left[\begin{array}{cc}0.21&0.18\\0.18&0.23\end{array}\right]$
		&$\left[\begin{array}{cc}1.96&1.57\\1.57&1.98\end{array}\right]$
		&$\left[\begin{array}{cc}0.34&0.28\\0.28&0.33\end{array}\right]$
		&$\left[\begin{array}{cc}0.29&0.24\\0.24&0.31\end{array}\right]$
		&$\left[\begin{array}{cc}1.95&1.57\\1.57&1.97\end{array}\right]$			
		&$\left[\begin{array}{cc}0.36&0.30\\0.30&0.34\end{array}\right]$
		&$\left[\begin{array}{cc}0.28&0.25\\0.25&0.30\end{array}\right]$
		\\
	$\matsig_2$
		&$\left[\begin{array}{cc}4.42&3.55\\3.55&4.48\end{array}\right]$
		&$\left[\begin{array}{cc}0.66&0.58\\0.58&0.68\end{array}\right]$
		&$\left[\begin{array}{cc}1.09&0.88\\0.88&1.15\end{array}\right]$
		&$\left[\begin{array}{cc}4.38&3.53\\3.53&4.43\end{array}\right]$
		&$\left[\begin{array}{cc}0.76&0.66\\0.66&0.76\end{array}\right]$
		&$\left[\begin{array}{cc}1.05&0.86\\0.86&1.10\end{array}\right]$
		&$\left[\begin{array}{cc}4.43&3.56\\3.56&4.50\end{array}\right]$
		&$\left[\begin{array}{cc}0.68&0.58\\0.58&0.68\end{array}\right]$
		&$\left[\begin{array}{cc}1.10&0.89\\0.89&1.17\end{array}\right]$
		\\
	$\lambda_1$
	&$-2.26$
	&$1.27$
	&$-1.76$
	&$-2.70$
	&$1.70$
	&$-2.20$
	&$-2.51$
	&$1.26$
	&$-2.01$
	\\
	$\lambda_2$
	&$2.93$
	&$1.40$
	&$1.93$
	&$2.79$
	&$1.40$
	&$1.79$
	&$2.88$
	&$1.40$
	&$1.88$
	\\
	$\pi_1$&$0.50$
	&$0.00$
	&$0.00$
	&$0.50$
	&$0.01$
	&$0.00$
	&$0.50$
	&$0.01$
	&$0.00$
	\\
	$\pi_2$&$0.50$
	&$0.00$
	&$0.00$
	&$0.50$
	&$0.01$
	&$0.00$
	&$0.50$
	&$0.01$
	&$0.00$
	\\
	ARI&$0.96$
	&$0.02$
	&
	&$0.96$
	&$0.02$
	&
	&$0.95$
	&$0.09$
	&
	\\
	\hline	
		\multicolumn{10}{c}{$r=0.15$}\\
		&\multicolumn{3}{c}{Pattern 1}&\multicolumn{3}{c}{Pattern 2}&\multicolumn{3}{c}{MCAR}\\
\cline{2-4}\cline{5-7}\cline{8-10}
	&
	Mean&Std. dev&Bias&Mean&Std. dev.&Bias&Mean&Std. dev.&Bias\\
	\hline
	$\vecmu_1$
	&$(0.81,-3.14)'$
	&$(1.14,1.14)'$
	&$(-0.19,-0.14)'$
	&$(0.72,-3.24)'$
	&$(1.10,0.89)'$
	&$(-0.28,-0.24)'$
	&$(0.75,-3.23)'$
	&$(1.24,1.11)'$
	&$(-0.25,-0.23)'$
	\\
	$\vecmu_2$
	&$(-0.69,3.45)'$
	&$(2.58,2.25)'$
	&$(0.31,0.45)'$
	&$(-0.49,3.47)'$
	&$(2.90,3.00)'$
	&$(0.51,0.47)'$
	&$(-0.67,3.34)'$
	&$(1.95,1.78)'$
	&$(0.33,0.34)'$
	\\
	$\vecbeta_1$
	&$(1.46,1.41)'$
	&$(1.40,1.41)'$
	&$(0.46,0.41)'$
	&$(1.38,1.11)'$
	&$(1.49,1.35)'$
	&$(0.38,0.11)'$
	&$(1.59,1.50)'$
	&$(1.59,1.50)'$
	&$(0.59,0.50)'$
	\\
	
	$\vecbeta_2$
	&$(-1.77,-1.95)'$
	&$(3.01,2.61)'$
	&$(-0.77,-0.95)'$
	&$(-2.06,-1.99)'$
	&$(3.40,3.50)'$
	&$(-1.06,-0.99)'$
	&$(-1.72,-1.80)'$
	&$(3.00,2.62)'$
	&$(-0.72,-0.80)'$
	\\
	$\vecmu_1+\vecbeta_1$
	&$(2.28,-1.72)'$
	&$(0.31,0.32)'$
	&$(0.28,0.28)'$
	&$(2.29,-1.69)'$
	&$(0.31,0.27)'$
	&$(0.29,0.31)'$
	&$(2.33,-1.61)'$
	&$(0.45,0.39)'$
	&$(0.33,0.39)'$
	\\
	$\vecmu_2+\vecbeta_2$
	&$(-2.47,1.51)'$
	&$(0.48,0.40)'$
	&$(-0.47,-0.49)'$
	&$(-2.55,1.48)'$
	&$(0.53,0.58)'$
	&$(-0.55,-0.52)'$
	&$(-2.39,1.54)'$
	&$(0.45,0.50)'$
	&$(-0.39,-0.48)'$
	\\
	$\matsig_1$
		&$\left[\begin{array}{cc}1.94&1.55\\1.55&1.95\end{array}\right]$
		&$\left[\begin{array}{cc}0.38&0.34\\0.34&0.38\end{array}\right]$
		&$\left[\begin{array}{cc}0.27&0.22\\0.22&0.28\end{array}\right]$
		&$\left[\begin{array}{cc}1.90&1.51\\1.51&1.90\end{array}\right]$
		&$\left[\begin{array}{cc}0.36&0.30\\0.30&0.33\end{array}\right]$
		&$\left[\begin{array}{cc}0.23&0.18\\0.18&0.23\end{array}\right]$
		&$\left[\begin{array}{cc}1.91&1.51\\1.51&1.93\end{array}\right]$			
		&$\left[\begin{array}{cc}0.44&0.33\\0.33&0.43\end{array}\right]$
		&$\left[\begin{array}{cc}0.24&0.18\\0.18&0.26\end{array}\right]$
		\\
	$\matsig_2$
		&$\left[\begin{array}{cc}4.28&3.41\\3.41&4.28\end{array}\right]$
		&$\left[\begin{array}{cc}0.72&0.62\\0.62&0.69\end{array}\right]$
		&$\left[\begin{array}{cc}0.95&0.74\\0.74&0.95\end{array}\right]$
		&$\left[\begin{array}{cc}4.30&3.43\\3.43&4.33\end{array}\right]$
		&$\left[\begin{array}{cc}0.67&0.60\\0.60&0.70\end{array}\right]$
		&$\left[\begin{array}{cc}0.74&0.76\\0.76&0.77\end{array}\right]$
		&$\left[\begin{array}{cc}4.42&3.55\\3.55&4.50\end{array}\right]$
		&$\left[\begin{array}{cc}0.73&0.68\\0.68&0.75\end{array}\right]$
		&$\left[\begin{array}{cc}1.09&0.88\\0.88&1.17\end{array}\right]$
		\\
	$\lambda_1$
	&$-2.51$
	&$1.38$
	&$-2.01$
	&$-2.47$
	&$1.67$
	&$-1.98$
	&$-3.14$
	&$1.77$
	&$-2.64$
	\\
	$\lambda_2$
	&$3.12$
	&$1.76$
	&$2.12$
	&$3.24$
	&$1.43$
	&$2.24$
	&$2.52$
	&$1.44$
	&$1.52$
	\\
	$\pi_1$&$0.50$
	&$0.01$
	&$0.00$
	&$0.50$
	&$0.01$
	&$0.00$
	&$0.50$
	&$0.01$
	&$0.00$
	\\
	$\pi_2$&$0.50$
	&$0.01$
	&$0.00$
	&$0.50$
	&$0.01$
	&$0.00$
	&$0.50$
	&$0.01$
	&$0.00$
	\\
	ARI&$0.93$
	&$0.02$
	&
	&$0.93$
	&$0.02$
	&
	&$0.93$
	&$0.09$
	&
	\\
	\hline		
			\multicolumn{10}{c}{$r=0.30$}\\
		&\multicolumn{3}{c}{Pattern 1}&\multicolumn{3}{c}{Pattern 2}&\multicolumn{3}{c}{MCAR}\\
\cline{2-4}\cline{5-7}\cline{8-10}
	&
	Mean&Std. dev&Bias&Mean&Std. dev.&Bias&Mean&Std. dev.&Bias\\
	\hline
	$\vecmu_1$
	&$(0.65,-3.30)'$
	&$(1.20,1.22)'$
	&$(-0.35,-0.30)'$
	&$(0.53,-3.44)'$
	&$(1.23,1.08)'$
	&$(-0.47,-0.44)'$
	&$(0.39,-3.49)'$
	&$(1.48,1.53)'$
	&$(-0.61,-0.49)'$
	\\
	$\vecmu_2$
	&$(-0.58,3.27)'$
	&$(1.96,2.17)'$
	&$(0.42,0.27)'$
	&$(-0.00,4.09)'$
	&$(2.62,2.17)'$
	&$(1,1.09)'$
	&$(-0.81,3.25)'$
	&$(1.89,1.95)'$
	&$(0.19,0.25)'$
	\\
	$\vecbeta_1$
	&$(1.72,1.65)'$
	&$(1.52,1.55)'$
	&$(0.72,0.65)'$
	&$(1.82,1.78)'$
	&$(1.49,1.35)'$
	&$(0.82,0.78)'$
	&$(1.95,1.91)'$
	&$(1.81,1.79)'$
	&$(0.95,0.91)'$
	\\
	
	$\vecbeta_2$
	&$(-1.89,-1.72)'$
	&$(2.35,2.57)'$
	&$(-0.89,-0.72)'$
	&$(-2.63,-2.75)'$
	&$(3.10,2.56)'$
	&$(-1.63,-1.75)'$
	&$(-1.62,-1.78)'$
	&$(2.26,2.22)'$
	&$(-0.62,-0.78)'$
	\\
	$\vecmu_1+\vecbeta_1$
	&$(2.37,-1.65)'$
	&$(0.41,0.42)'$
	&$(0.37,0.35)'$
	&$(2.36,-1.65)'$
	&$(0.32,0.32)'$
	&$(0.36,0.35)'$
	&$(2.35,-1.58)'$
	&$(0.52,0.47)'$
	&$(0.35,0.42)'$
	\\
	$\vecmu_2+\vecbeta_2$
	&$(-2.47,1.55)'$
	&$(0.45,0.45)'$
	&$(-0.47,-0.45)'$
	&$(-2.63,1.34)'$
	&$(0.51,0.43)'$
	&$(-0.65,-0.66)'$
	&$(-2.42,1.47)'$
	&$(0.59,0.50)'$
	&$(-0.42,-0.53)'$
	\\
	$\matsig_1$
		&$\left[\begin{array}{cc}2.00&1.60\\1.60&2.00\end{array}\right]$
		&$\left[\begin{array}{cc}0.41&0.35\\0.35&0.38\end{array}\right]$
		&$\left[\begin{array}{cc}0.33&0.27\\0.27&0.33\end{array}\right]$
		&$\left[\begin{array}{cc}1.90&1.51\\1.51&1.90\end{array}\right]$
		&$\left[\begin{array}{cc}0.34&0.28\\0.28&0.33\end{array}\right]$
		&$\left[\begin{array}{cc}0.23&0.18\\0.18&0.23\end{array}\right]$
		&$\left[\begin{array}{cc}2.00&1.60\\1.60&1.98\end{array}\right]$			
		&$\left[\begin{array}{cc}0.56&0.48\\0.48&0.53\end{array}\right]$
		&$\left[\begin{array}{cc}0.33&0.27\\0.27&0.31\end{array}\right]$
		\\
	$\matsig_2$
		&$\left[\begin{array}{cc}4.44&3.56\\3.56&4.45\end{array}\right]$
		&$\left[\begin{array}{cc}0.79&0.70\\0.70&0.76\end{array}\right]$
		&$\left[\begin{array}{cc}1.11&0.89\\0.89&1.12\end{array}\right]$
		&$\left[\begin{array}{cc}4.33&3.45\\3.45&4.33\end{array}\right]$
		&$\left[\begin{array}{cc}0.76&0.66\\0.66&0.76\end{array}\right]$
		&$\left[\begin{array}{cc}1.00&0.78\\0.78&1.00\end{array}\right]$
		&$\left[\begin{array}{cc}4.37&3.49\\3.49&4.32\end{array}\right]$
		&$\left[\begin{array}{cc}0.96&0.85\\0.85&0.88\end{array}\right]$
		&$\left[\begin{array}{cc}1.04&0.82\\0.82&0.99\end{array}\right]$
		\\
	$\lambda_1$
	&$-2.26$
	&$1.34$
	&$-1.76$
	&$-2.73$
	&$1.70$
	&$-2.23$
	&$-2.61$
	&$1.90$
	&$-2.11$
	\\
	$\lambda_2$
	&$2.83$
	&$1.57$
	&$1.83$
	&$2.74$
	&$1.40$
	&$1.74$
	&$2.37$
	&$1.72$
	&$1.37$
	\\
	$\pi_1$&$0.50$
	&$0.01$
	&$0.00$
	&$0.50$
	&$0.01$
	&$0.00$
	&$0.50$
	&$0.01$
	&$0.00$
	\\
	$\pi_2$&$0.50$
	&$0.01$
	&$0.00$
	&$0.50$
	&$0.01$
	&$0.00$
	&$0.50$
	&$0.01$
	&$0.00$
	\\
	ARI&$0.90$
	&$0.03$
	&
	&$0.90$
	&$0.03$
	&
	&$0.90$
	&$0.09$
	&
	\\
	\hline	
	\end{tabular*}}
\end{sidewaystable}

\begin{sidewaystable}[ht]
\caption{Key model parameters as well as means, standard deviations and bias of the associated parameter estimations from the 100 runs for the first simulation experiment.}
\label{sim3parresult}
	\centering
	\tiny{
	\begin{tabular*}{1.1\textwidth}{@{\extracolsep{\fill}}cccccccccc}
		\hline
		\multicolumn{10}{c}{Sim3 using MST}\\
		\hline
		\multicolumn{10}{c}{$r=0.05$}\\
		&\multicolumn{3}{c}{Pattern 1}&\multicolumn{3}{c}{Pattern 2}&\multicolumn{3}{c}{MCAR}\\
\cline{2-4}\cline{5-7}\cline{8-10}
	&
	Mean&Std. dev&Bias&Mean&Std. dev.&Bias&Mean&Std. dev.&Bias\\
	\hline
	$\vecmu_1$
	&$(1.05,-3.18)'$
	&$(0.47,0.36)'$
	&$(0.05,-0.18)'$
	
	&$(0.98,-3.12)'$
	&$(0.48,0.35)'$
	&$(-0.02,-0.12)'$
	
	&$(1.00,-3.14)'$
	&$(0.47,0.36)'$
	&$(0.00,-0.14)'$
	\\
	$\vecmu_2$
	&$(-0.80,3.13)'$
	&$(0.58,0.38)'$
	&$(0.20,0.13)'$
	
	&$(-0.78,3.22)'$
	&$(0.58,0.40)'$
	&$(0.22,0.22)'$
	&$(-0.77,3.21)'$
	&$(0.58,0.43)'$
	&$(0.23,0.21)'$
	\\
	$\vecbeta_1$
	&$(0.93,1.25)'$
	&$(0.48,0.39)'$
	&$(-0.07,0.25)'$
	
	&$(0.95,1.23)'$
	&$(0.45,0.35)'$
	&$(-0.05,0.23)'$
	
	&$(0.96,1.20)'$
	&$(0.45,0.35)'$
	&$(-0.04,0.20)'$
	\\
	
	$\vecbeta_2$
	&$(-1.06,-1.18)'$
	&$(0.56,0.38)'$
	&$(-0.06,-0.18)'$
	
	&$(-1.16,-1.19)'$
	&$(0.55,0.42)'$
	&$(-0.16,-0.19)'$
	&$(-1.13,-1.26)'$
	&$(0.60,0.47)'$
	&$(-0.13,-0.26)'$
	\\
	$\vecmu_1+\vecbeta_1$
	&$(1.98,-1.93)'$
	&$(0.16,0.08)'$
	&$(-0.02,0.07)'$
	
	&$(1.93,-1.89)'$
	&$(0.12,0.07)'$
	&$(-0.07,0.11)'$
	
	&$(1.96,-1.93)'$
	&$(0.12,0.08)'$
	&$(-0.04,0.07)'$
	\\
	$\vecmu_2+\vecbeta_2$
	&$(-1.87,1.94)'$
	&$(0.23,0.10)'$
	&$(0.13,-0.06)'$
	
	&$(-1.94,2.03)'$
	&$(0.22,0.10)'$
	&$(0.06,0.03)'$
	
	&$(-1.90,1.95)'$
	&$(0.26,0.11)'$
	&$(0.10,-0.05)'$
	\\
	$\matsig_1$
		&$\left[\begin{array}{cc}3.36&0\\0&0.34\end{array}\right]$
		&$\left[\begin{array}{cc}0.46&0\\0&0.08\end{array}\right]$
		&$\left[\begin{array}{cc}0.36&0\\0&0.01\end{array}\right]$
		
		&$\left[\begin{array}{cc}3.33&0\\0&0.34\end{array}\right]$
		&$\left[\begin{array}{cc}0.42&0\\0&0.08\end{array}\right]$
		&$\left[\begin{array}{cc}0.33&0\\0&0.01\end{array}\right]$
		
		&$\left[\begin{array}{cc}3.32&0\\0&0.35\end{array}\right]$		
		&$\left[\begin{array}{cc}0.52&0\\0&0.09\end{array}\right]$
		&$\left[\begin{array}{cc}0.32&0\\0&0.02\end{array}\right]$
		\\
	$\matsig_2$
		&$\left[\begin{array}{cc}6.57&0\\0&0.66\end{array}\right]$
		&$\left[\begin{array}{cc}1.02&0\\0&0.14\end{array}\right]$
		&$\left[\begin{array}{cc}0.57&0\\0&-0.01\end{array}\right]$
		
		&$\left[\begin{array}{cc}6.58&0\\0&0.67\end{array}\right]$
		&$\left[\begin{array}{cc}1.16&0\\0&0.15\end{array}\right]$
		&$\left[\begin{array}{cc}0.58&0\\0&0.00\end{array}\right]$
		
		&$\left[\begin{array}{cc}6.51&0\\0&0.67\end{array}\right]$
		&$\left[\begin{array}{cc}1.14&0\\0&0.15\end{array}\right]$
		&$\left[\begin{array}{cc}0.51&0\\0&0.00\end{array}\right]$
		\\
	$\nu_1$
	&$8.25$
	&$3.18$
	&$1.25$
	
	&$8.14$
	&$3.01$
	&$1.14$
	
	&$8.00$
	&$2.77$
	&$1.00$
	\\
	$\nu_2$
	&$5.89$
	&$1.98$
	&$0.89$
	
	&$5.79$
	&$1.40$
	&$0.79$
	
	&$6.35$
	&$2.48$
	&$1.35$
	\\
	$\pi_1$&$0.50$
	&$0.02$
	&$0.00$
	&$0.50$
	&$0.01$
	&$0.00$
	&$0.50$
	&$0.02$
	&$0.00$
	\\
	$\pi_2$&$0.50$
	&$0.02$
	&$0.00$
	&$0.50$
	&$0.01$
	&$0.00$
	&$0.50$
	&$0.02$
	&$0.00$
	\\
	ARI&$0.81$
	&$0.03$
	&
	&$0.96$
	&$0.02$
	&
	&$0.81$
	&$0.09$
	&
	\\
	\hline	
		\multicolumn{10}{c}{$r=0.15$}\\
		&\multicolumn{3}{c}{Pattern 1}&\multicolumn{3}{c}{Pattern 2}&\multicolumn{3}{c}{MCAR}\\
\cline{2-4}\cline{5-7}\cline{8-10}
	&
	Mean&Std. dev&Bias&Mean&Std. dev.&Bias&Mean&Std. dev.&Bias\\
	\hline
	$\vecmu_1$
	&$(0.98,-3.26)'$
	&$(0.65,0.43)'$
	&$(-0.02,-0.26)'$
	
	&$(0.96,-3.26)'$
	&$(0.59,0.43)'$
	&$(-0.04,-0.26)'$
	
	&$(0.84,-3.26)'$
	&$(0.90,0.52)'$
	&$(-0.16,-0.26)'$
	\\
	$\vecmu_2$
	&$(-0.76,3.22)'$
	&$(0.50,0.42)'$
	&$(0.24,0.22)'$
	
	&$(-0.89,3.21)'$
	&$(0.47,0.37)'$
	&$(0.11,0.21)'$
	
	&$(-0.73,3.25)'$
	&$(0.64,0.60)'$
	&$(0.27,0.25)'$
	\\
	$\vecbeta_1$
	&$(0.99,1.33)'$
	&$(0.66,0.45)'$
	&$(-0.01,0.33)'$
	
	&$(0.98,1.32)'$
	&$(0.55,0.44)'$
	&$(-0.02,0.32)'$
	
	&$(1.08,1.34)'$
	&$(0.89,0.60)'$
	&$(0.08,0.34)'$
	\\
	
	$\vecbeta_2$
	&$(-1.09,-1.28)'$
	&$(0.49,0.44)'$
	&$(-0.09,-0.28)'$
	
	&$(-1.02,-1.27)'$
	&$(0.48,0.38)'$
	&$(-0.02,-0.27)'$
	
	&$(-1.12,-1.29)'$
	&$(0.46,0.52)'$
	&$(-0.12,-0.29)'$
	\\
	$\vecmu_1+\vecbeta_1$
	&$(1.98,-1.93)'$
	&$(0.14,0.08)'$
	&$(-0.02,0.07)'$
	
	&$(1.94,-1.94)'$
	&$(0.15,0.08)'$
	&$(-0.06,0.06)'$
	
	&$(1.92,-1.92)'$
	&$(0.25,0.20)'$
	&$(-0.08,0.08)'$
	\\
	$\vecmu_2+\vecbeta_2$
	&$(-1.86,1.94)'$
	&$(0.22,0.10)'$
	&$(0.14,-0.06)'$
	
	&$(-1.91,1.93)'$
	&$(0.22,0.11)'$
	&$(0.09,0.07)'$
	
	&$(-1.86,1.95)'$
	&$(0.32,0.22)'$
	&$(0.14,-0.05)'$
	\\
	$\matsig_1$
		&$\left[\begin{array}{cc}3.35&0\\0&0.32\end{array}\right]$
		&$\left[\begin{array}{cc}0.51&0\\0&0.08\end{array}\right]$
		&$\left[\begin{array}{cc}0.35&0\\0&-0.01\end{array}\right]$
		
		&$\left[\begin{array}{cc}3.38&0\\0&0.33\end{array}\right]$
		&$\left[\begin{array}{cc}0.55&0\\0&0.01\end{array}\right]$
		&$\left[\begin{array}{cc}0.38&0\\0&0.00\end{array}\right]$
		
		&$\left[\begin{array}{cc}3.36&0\\0&0.33\end{array}\right]$			
		&$\left[\begin{array}{cc}0.61&0\\0&0.09\end{array}\right]$
		&$\left[\begin{array}{cc}0.36&0\\0&0.00\end{array}\right]$
		\\
	$\matsig_2$
		&$\left[\begin{array}{cc}6.84&0\\0&0.65\end{array}\right]$
		&$\left[\begin{array}{cc}1.26&0\\0&0.15\end{array}\right]$
		&$\left[\begin{array}{cc}0.84&0\\0&-0.02\end{array}\right]$
		
		&$\left[\begin{array}{cc}6.79&0\\0&0.65\end{array}\right]$
		&$\left[\begin{array}{cc}1.13&0\\0&0.17\end{array}\right]$
		&$\left[\begin{array}{cc}0.79&0\\0&-0.02\end{array}\right]$
		
		&$\left[\begin{array}{cc}6.55&0\\0&0.63\end{array}\right]$
		&$\left[\begin{array}{cc}1.44&0\\0&0.17\end{array}\right]$
		&$\left[\begin{array}{cc}0.55&0\\0&-0.04\end{array}\right]$
		\\
	$\nu_1$
	&$8.65$
	&$4.21$
	&$1.65$
	
	&$9.27$
	&$5.90$
	&$2.27$
	
	&$9.37$
	&$6.46$
	&$2.37$
	\\
	$\nu_2$
	&$6.35$
	&$1.91$
	&$1.35$
	
	&$6.13$
	&$1.69$
	&$1.13$
	
	&$6.57$
	&$2.88$
	&$1.57$
	\\
	$\pi_1$&$0.50$
	&$0.01$
	&$0.00$
	&$0.50$
	&$0.01$
	&$0.00$
	&$0.50$
	&$0.02$
	&$0.00$
	\\
	$\pi_2$&$0.50$
	&$0.01$
	&$0.00$
	&$0.50$
	&$0.01$
	&$0.00$
	&$0.50$
	&$0.02$
	&$0.00$
	\\
	ARI&$0.78$
	&$0.04$
	&
	&$0.78$
	&$0.04$
	&
	&$0.74$
	&$0.04$
	&
	\\
	\hline		
			\multicolumn{10}{c}{$r=0.30$}\\
		&\multicolumn{3}{c}{Pattern 1}&\multicolumn{3}{c}{Pattern 2}&\multicolumn{3}{c}{MCAR}\\
\cline{2-4}\cline{5-7}\cline{8-10}
	&
	Mean&Std. dev&Bias&Mean&Std. dev.&Bias&Mean&Std. dev.&Bias\\
	\hline
	$\vecmu_1$
	&$(0.95,-3.30)'$
	&$(0.53,0.47)'$
	&$(-0.05,-0.30)'$
	
	&$(0.96,-3.20)'$
	&$(0.52,0.42)'$
	&$(-0.04,-0.20)'$
	
	&$(0.74,-3.21)'$
	&$(0.90,0.47)'$
	&$(-0.26,-0.21)'$
	\\
	$\vecmu_2$
	&$(-0.78,3.27)'$
	&$(0.58,0.41)'$
	&$(0.22,0.27)'$
	
	&$(-0.72,3.22)'$
	&$(0.60,0.38)'$
	&$(0.28,0.22)'$
	
	&$(-0.45,3.24)'$
	&$(1.18,0.51)'$
	&$(0.55,0.24)'$
	\\
	$\vecbeta_1$
	&$(1.05,1.27)'$
	&$(0.51,0.47)'$
	&$(0.05,0.27)'$
	
	&$(0.96,1.25)'$
	&$(0.50,0.43)'$
	&$(-0.04,0.25)'$
	
	&$(1.10,1.29)'$
	&$(0.93,0.49)'$
	&$(0.10,0.29)'$
	\\
	
	$\vecbeta_2$
	&$(-1.19,-1.27)'$
	&$(1.35,0.57)'$
	&$(-0.19,-0.27)'$
	
	&$(-1.16,-1.30)'$
	&$(0.55,0.43)'$
	&$(-0.16,-0.30)'$
	
	&$(-1.31,-1.32)'$
	&$(1.18,0.54)'$
	&$(-0.31,-0.32)'$
	\\
	$\vecmu_1+\vecbeta_1$
	&$(2.00,-2.03)'$
	&$(0.12,0.09)'$
	&$(0.00,-0.03)'$
	
	&$(1.92,-1.94)'$
	&$(0.17,0.09)'$
	&$(-0.08,0.06)'$
	
	&$(1.84,-1.92)'$
	&$(0.19,0.10)'$
	&$(-0.16,0.08)'$
	\\
	$\vecmu_2+\vecbeta_2$
	&$(-1.97,2.00)'$
	&$(0.25,0.11)'$
	&$(0.03,0.00)'$
	
	&$(-1.87,1.92)'$
	&$(0.25,0.13)'$
	&$(0.13,-0.08)'$
	
	&$(-1.76,1.92)'$
	&$(0.29,0.14)'$
	&$(0.24,-0.08)'$
	\\
	$\matsig_1$
		&$\left[\begin{array}{cc}3.26&0\\0&0.33\end{array}\right]$
		&$\left[\begin{array}{cc}0.54&0\\0&0.13\end{array}\right]$
		&$\left[\begin{array}{cc}0.26&0\\0&0.00\end{array}\right]$
		
		&$\left[\begin{array}{cc}3.26&0\\0&0.33\end{array}\right]$
		&$\left[\begin{array}{cc}0.52&0\\0&0.10\end{array}\right]$
		&$\left[\begin{array}{cc}0.26&0\\0&0.00\end{array}\right]$
		
		&$\left[\begin{array}{cc}3.24&0\\0&0.36\end{array}\right]$		
		&$\left[\begin{array}{cc}0.61&0\\0&0.14\end{array}\right]$
		&$\left[\begin{array}{cc}0.24&0\\0&0.03\end{array}\right]$
		\\
	$\matsig_2$
		&$\left[\begin{array}{cc}6.53&0\\0&0.67\end{array}\right]$
		&$\left[\begin{array}{cc}1.53&0\\0&0.20\end{array}\right]$
		&$\left[\begin{array}{cc}0.53&0\\0&0.00\end{array}\right]$
		
		&$\left[\begin{array}{cc}6.65&0\\0&0.67\end{array}\right]$
		&$\left[\begin{array}{cc}1.23&0\\0&0.18\end{array}\right]$
		&$\left[\begin{array}{cc}0.65&0\\0&0.00\end{array}\right]$
		
		&$\left[\begin{array}{cc}6.35&0\\0&0.67\end{array}\right]$
		&$\left[\begin{array}{cc}1.63&0\\0&0.20\end{array}\right]$
		&$\left[\begin{array}{cc}0.35&0\\0&0.00\end{array}\right]$
		\\
	$\nu_1$
	&$8.56$
	&$4.12$
	&$1.56$
	
	&$8.42$
	&$3.42$
	&$1.42$
	
	&$9.19$
	&$4.71$
	&$2.19$
	\\
	$\nu_2$
	&$6.83$
	&$2.57$
	&$1.83$
	
	&$6.34$
	&$2.02$
	&$1.34$
	
	&$7.26$
	&$4.37$
	&$2.26$
	\\
	$\pi_1$&$0.50$
	&$0.01$
	&$0.00$
	&$0.50$
	&$0.02$
	&$0.00$
	&$0.50$
	&$0.02$
	&$0.00$
	\\
	$\pi_2$&$0.50$
	&$0.01$
	&$0.00$
	&$0.50$
	&$0.02$
	&$0.00$
	&$0.50$
	&$0.02$
	&$0.00$
	\\
	ARI&$0.75$
	&$0.05$
	&
	&$0.76$
	&$0.05$
	&
	&$0.75$
	&$0.05$
	&
	\\
	\hline	
	\end{tabular*}}
\end{sidewaystable}

\begin{table}[htbp]
	\centering
	\caption{A comparsion of average BIC and ARI between MGHD, MST, and M\textit{t} models (replications=100) with $G=1,\ldots,4$.}
	
		\begin{tabular*}{1\textwidth}{@{\extracolsep{\fill}}lccccccc}
		  \hline
 		&&\multicolumn{2}{c}{MGHD} & \multicolumn{2}{c}{MST} & \multicolumn{2}{c}{M\textit{t}} \\ 
		      \cline{3-4}\cline{5-6}\cline{7-8}
		               &&BIC&ARI&BIC&ARI&BIC&ARI\\
 		 \hline
\multirow{3}{*}{Sim1}&$r=0.05$&$-1534$&0.95&$-1644$&0.88&$-1663$&0.75\\
&$r=0.15$&$-1412$&0.87&$-1517$&0.82&$-1559$&0.69\\
&$r=0.30$&$-1230$&0.74&$-1301$&0.69&$-1396$&0.60\\
\hline
\multirow{3}{*}{Sim2}&$r=0.05$&$-1647$&0.73&$-1683$&0.64&$-1823$&0.59\\
&$r=0.15$&$-1435$&0.62&$-1538$&0.52&$-1677$&0.48\\
&$r=0.30$&$-1201$&0.46&$-1266$&0.36&$-1463$&0.36\\
\hline
\multirow{3}{*}{Sim3}&$r=0.05$&$-1667$&0.82&$-1689$&0.76&$-1789$&0.64\\
&$r=0.15$&$-1517$&0.76&$-1502$&0.66&$-1622$&0.63\\
&$r=0.30$&$-1203$&0.70&$-1264$&0.60&$-1410$&0.48\\
\hline
\multirow{3}{*}{Sim4}&$r=0.05$&$-1546$&0.72&$-1608$&0.41&$-1849$&0.33\\
&$r=0.15$&$-1333$&0.60&$-1440$&0.37&$-1727$&0.27\\
&$r=0.30$&$-1142$&0.12&$-1171$&0.23&$-1385$&0.20\\
\hline
\multirow{3}{*}{Sim5}&$r=0.05$&$-1507$&0.94&$-1613$&0.74&$-1619$&0.88\\
&$r=0.15$&$-1366$&0.85&$-1507$&0.66&$-1450$&0.78\\
&$r=0.30$&$-1193$&0.71&$-1340$&0.59&$-1247$&0.64\\
\hline
\multirow{3}{*}{Sim6}&$r=0.05$&$-1356$&0.68&$-1445$&0.40&$-1614$&0.38\\
&$r=0.15$&$-1262$&0.58&$-1389$&0.38&$-1522$&0.35\\
&$r=0.30$&$-1130$&0.40&$-1263$&0.28&$-1385$&0.29\\

\hline
	\end{tabular*}
	\label{table:SimNG}
\end{table}

\end{document}